\newenvironment{proof}{ \textbf{Proof:} }{ \hfill $\Box$}
\def\bb0{{\mathbb{0}}}
\def\bb{{\mathbf{b}}}
\def\br{{\mathbf{r}}}
\def\b0{{\mathbf{0}}}
\def\b1{{\mathbf{1}}}
\def\sfC{\mathsf{C}}
\def\sfF{\mathsf{F}}
\def\sfc{{\mathsf{c}}}
\def\sf0{{\mathsf{0}}}
\tikzstyle{every picture}+=[remember picture]
\newcommand{\ignore}[1]{}
\newcommand{\expect}[1]{{\mathbb E} \left[ #1 \right]}
\newcommand{\prob}[1]{{\mathbb P} \left( #1\right)}
\def\0v{\mathcal{U}_{0}}
\def\1v{\mathcal{U}_{1}}
\def \OO {\mathrm{O}}
\def \oo {\mathrm{o}}
\newtheorem{theorem}{Theorem}
\newtheorem{lemma}{Lemma}
\newtheorem{remark}{Remark}
\newtheorem{corollary}{Corollary}
\newtheorem{assumption}{Assumption}
\title{Paging with Multiple Caches}
\author{Rahul Vaze\\
	School of Technology and Computer Science \\
	Tata Institute of Fundamental Research \\
	       Email: vaze@tcs.tifr.res.in
	\and
	Sharayu Moharir\\
	Department of Electrical Engineering \\
	Indian Institute of Technology, Bombay \\
	Email: sharayum@ee.iitb.ac.in
}
\begin{document}
\maketitle

\begin{abstract}
	Modern content delivery networks consist of one or more ``back-end" servers which store the entire content catalog, assisted by multiple ``front-end" servers with limited storage and service capacities located near the end-users. Appropriate replication of content on the front-end servers is key to maximize the fraction of requests served by the front-end servers.
		
	Motivated by this, a multiple cache variant of the classical single cache paging problem is studied, which is referred to as the Multiple Cache Paging (MCP) problem. In each time-slot, a batch of content requests arrive that have to be served by a bank of caches, and each cache can serve exactly one request. If a content is not found in the bank, it is fetched from the back-end server, and one currently stored content is ejected, and counted as  `fault'. 
	
As in the classical paging problem, the goal is to minimize the total number of faults. The competitive ratio of any online algorithm for the MCP problem is shown to be unbounded for arbitrary input, thus concluding that the MCP problem is fundamentally different from the classical paging problem. Consequently, stochastic arrivals setting is considered, where requests arrive according to a known/unknown stochastic process. It is shown that near optimal performance can be achieved with simple policies that  require no co-ordination across the caches. 
\end{abstract}
\section{Introduction}
To serve the ever increasing video traffic demand over the internet, many Video on Demand (VoD) services like Netflix \cite{Netflix} and Youtube \cite{Youtube} use a two-layered content delivery network \cite{Netflix_openconnect}. The network consists of a back-end server which stores the entire catalog of contents offered by the service and multiple front-end servers, each with limited service and storage capacity, located at the `edge' of the network, i.e., close to the end users. Content can be fetched from the back-end server and replicated on the front-end server to serve user requests. Each such fetch adds to the cost the network pays to serve user requests. Compared to the back-end server, the front-end servers can serve user requests more efficiently due to their proximity to the users. The motivation behind this hierarchical architecture is to serve most of the requests using the front-end servers, thus, reducing the load on the back-end server and therefore the network backbone. The task of allocating incoming requests to front-end servers and the content replication strategy, i.e., which content is stored on the front-end servers form an important part of the system architecture and have been studied in various settings \cite{SGSS14, LLM12,XT13,LLM13,aaglr10, maddah2014fundamental, maddah2013decentralized, niesen2014coded}.

Motivated by this, we study a multiple cache variant of the single cache paging problem \cite{BookRaghavan}, which we call the Multiple Cache Paging (MCP) problem. The MCP problem is defined as follows: we consider a time-slotted system where in each time slot, similar to \cite{maddah2014fundamental, maddah2013decentralized, niesen2014coded}, a batch of requests arrives such that the number of requests in each batch is equal to the number of caches/servers. We assume that each cache can serve at most one request at a time. The task is to match these requests to the caches. If a request is matched to a server which does not have the requested content stored on it, the requested content is fetched from the back-end server, and replicated on the cache by ejecting one of the currently stored contents. Every such ejection/fetch is referred to as a fault. The MCP problem has two inter-related challenges: matching requests and ejecting content, in an online manner, i.e., without knowing the future arrival sequence. Our objective is to design online algorithms which minimize the number of faults made to serve incoming requests. The classical paging problem is a special case of the MCP problem with only one single cache \cite{sleator1985amortized}.  

We first study the most general case where the request arrival sequence can be arbitrary. To characterize the performance of an online algorithm, we consider the metric of competitive ratio that is defined as the ratio of the faults made by the online policy and the offline optimal policy, maximized over all input sequences. The competitive ratio is a worst case guarantee on the performance of an online algorithm and therefore can be too pessimistic, however, for the classical paging problem, it is known that there are deterministic \cite{sleator1985amortized} as well as randomized algorithms \cite{fiat1991competitive} with bounded competitive ratios. 

We also study the stochastic setting, where requests are generated by a known/unknown stochastic process. We focus on two specific cases: \emph{(i)}  iid Zipf distribution, i.e., content popularity is heavy-tailed where the popularity of the $i^{\text{th}}$ most popular content is proportional to $i^{-\beta}$ for a constant $\beta>0$. The Zipf distribution, is known to be a good match for content popularity in VoD services \cite{Gill07,BC99, YZ06, IRF04, VA02}, \emph{(ii)} a specific correlated content access model, where contents requested by each of the stream are allowed to be correlated across time.

\subsection{Contributions}

The main contributions of our work are as follows:

\begin{enumerate}
	
	\item \textbf{Adversarial setting:}  We show that  for the MCP problem, the competitive ratio of any online algorithm is unbounded (Theorem \ref{theorem:competitive_ratio}). This is surprising, since for the classical paging problem with single cache \cite{BookRaghavan}, there are deterministic algorithms with competitive ratios at most equal to the size of the cache memory \cite{sleator1985amortized}. We obtain this result by exploiting the additional matching decision required in solving MCP compared to the single cache problem, and thus conclude that the MCP problem is fundamentally different from the classical paging problem. 
	
	\item \textbf{Stochastic setting with known popularity:} We consider a policy called Cache Most Popular (CMP), that has been proposed before in \cite{LLM12}, however, in a static scenario. The key idea of the CMP policy is that each cache stores the same most popular contents. We show that even with a fixed matching between requests and caches, and even if each cache makes its ejection decisions independently, CMP is optimal for Zipf distribution with $\beta <1$, while for $\beta> 1$ its competitive ratio grows as $m^{\beta-1}$, where $m$ is the number of servers, however, is independent of the size of each cache $k$ (Theorem \ref{thm:zipf_iid}).  Note that since $\beta <2$ (typically $1.2$) for most applications, our results show that CMP policy's competitive ratio grows sub-linearly in the number of servers. Similar result is also true for CMP when the stochastic input is correlated, for a specific correlation model described later in detail.
	
	\item \textbf{Stochastic setting with unknown popularity distribution:} For this setting, we consider the widely popular Least Recently Used (LRU) policy \cite{sleator1985amortized}, and show that even with a fixed matching between requests and caches, and no coordination across caches, the competitive ratio of the LRU policy is $m^{\beta-1} (\ln k)^{\left(2-\frac{2}{\beta}\right)}$, when the underlying distribution is Zipf with parameter $\beta > 1$ and $k$ is the size of each cache. Identical result is also obtained for the correlated input model. Thus, comparing the known and unknown popularity distribution, the price to pay for lack of knowledge of distribution is $(\ln k)^{\left(2-\frac{2}{\beta}\right)}$. For technical difficulties, we do not get a result for $\beta < 1$ with unknown distribution.
\end{enumerate} 

In the context of the motivating application of content delivery, our results show 
that close to optimal performance can be achieved even when the routing (matching requests to caches) is done in a decentralized manner (actually fixed ahead of time, which could model any geographical or load balancing constraint), 
and there is no co-ordination between the front-end servers. 

\subsection{Related Work}
\label{subsec:related_work}

\subsubsection{Paging Problem} The paging problem with a single cache (size $k$) has been widely studied in the online algorithms literature \cite{BookRaghavan}. 
For this problem, ejecting the least recently used page (LRU), and first-in-first-out (FIFO) are known to be optimal deterministic online algorithms \cite{sleator1985amortized} with a competitive ratio of $k$. Moreover, the optimal competitive ratio of 
$\ln k$ can be achieved by a  randomized version of LRU \cite{fiat1991competitive}. The paging problem with a single cache has also been studied for Markovian arrivals in \cite{karlin1992markov} and close to optimal algorithms have been proposed. 

\subsubsection{Content Replication Policies} The problem of content replication and request matching for content delivery networks have been studied in \cite{SGSS14, LLM12,XT13,LLM13,aaglr10, maddah2014fundamental, maddah2013decentralized, niesen2014coded}. The setting where content popularity is known is studied in \cite{ LLM12,XT13,LLM13}, while \cite{SGSS14,aaglr10} focus on the setting where content popularity is unknown. Static content replication policies have been studied in \cite{LLM12,XT13,Whitt07,aaglr10}, where the decision about which contents to store in which cache are made ahead of time, and no ejection or
cache updation is allowed once the requests start to arrive. Optimal adaptive content replication polices for the setting where the number of front-end servers is large have been proposed in \cite{SGSS14,LLM13}. The importance of coded caching for optimal performance has been shown in \cite{maddah2014fundamental, maddah2013decentralized, niesen2014coded}.

In a major departure from prior work \cite{SGSS14, LLM12,XT13,LLM13} that study the asymptotic setting where the number of front-end servers scale linearly with the number of contents, we look at the setting where the number of front-end servers scale slower than the number of contents, for example, the number of front-end servers can be a constant. 
In terms of the proposed algorithms, a key difference between our work and all the prior work is that unlike CMP and LRU, all the algorithms proposed in \cite{SGSS14, LLM12,XT13,LLM13,Whitt07,aaglr10} are centralized where both routing and content replication decisions are made in a centralized manner. Thus, our algorithms are easily scalable and can satisfy any matching restrictions because of geographical or load balancing constraints.

\subsection{Modeling Details}
We consider the batch processing model for CDNs, similar to \cite{maddah2014fundamental, maddah2013decentralized, niesen2014coded}, where a 'batch' of requests arrives at each time instant such that the number of requests in each batch is equal to the number of caches/servers. To model a delay-sensitive and QoS guarantee setting, where  jobs/requests are neither queued nor dropped, each request is required to be served independent of its popularity by one of the servers.


\section{System Model}
Consider a bank of $m$ caches, each of which can store $k$ out of the $n$ contents in the catalog.  
The arrivals are time-slotted, where in each slot, $m$ parallel requests arrive to the system that have to be allocated/matched to/with the $m$ caches, such that each cache is allocated exactly one request. Once the requests are matched to the caches, if the requested content is not stored in the corresponding cache, the cache replaces one of the currently stored contents with the requested content to serve the request. We refer to such a replacement as a {\it fault}. We call this problem as multiple cache paging (MCP) and the goal is to design an online algorithm (without knowing the future requests) which minimizes the number of faults made to serve the requests, where both the  matching and the ejection problems have to be jointly solved. 
For $m=1$, this is the classical paging problem for which we summarize the results in Theorem \ref{thm:singlecache} and \ref{theorem:competitive_ratio}, where no matching decisions are needed.

To model the most general setting, we first consider the case that the content request sequence is arbitrary, i.e., it can even be chosen by an adversary. 
A natural measure for performance analysis of online algorithms is the competitive ratio, where the competitive ratio for an online algorithm ALG  $\rho_{\text{ALG}}$ is defined as
\begin{eqnarray*}
	\rho_{\text{ALG}} = \max_{R} \dfrac{\sfF_{\text{ALG}}(R)}{\sfF_{\text{OPT}}(R)},
\end{eqnarray*}
where, $R$ is the content request arrival sequence, and $\sfF_{\text{ALG}}(R)$ and $\sfF_{\text{OPT}}(R)$ are the number of faults made by algorithm ALG and the optimal offline algorithm, respectively. The offline optimal algorithm knows the entire sequence $R$ a-priori. We use the following notation throughout the paper. The request sequence $R = ({\br}(t))$, where each $\br$ is a $m$-length content request vector $\br = (r_1, \dots, r_m)$. The contents of cache $j$ at time $t$ are denoted as $\sfc_j(t)$, and $\sfC(t) = (\sfc_j(t), j=1,\dots m)$.

Next, we consider the stochastic setting, where the arrival sequence is generated by an underlying distribution. In this setting, the goal is to minimize the expected number of faults in $T$ time-slots, denoted by $\expect{\sfF_{\text{ALG}}(T)}$.
\section{Preliminaries}

Under the arbitrary input setting, the LRU algorithm has a competitive ratio of $k$ when $m=1$, which is also a 
lower bound on the competitive ratio. For completeness sake, and to keep the main negative result (Theorem \ref{theorem:competitive_ratio}) of this paper in perspective, we state the optimal optimal competitive ratio result for the single cache problem as follows. 

\begin{theorem}\label{thm:singlecache} LRU is an optimal deterministic online algorithm for MCP problem with a single cache, and its competitive ratio is $k$ \cite{BookRaghavan}.
\end{theorem}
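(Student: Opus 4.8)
The plan is to establish the claim in two parts: \emph{(i)} LRU's competitive ratio is at most $k$, and \emph{(ii)} no deterministic online algorithm can achieve a competitive ratio below $k$; together these show that LRU attains the optimum $k$. Since this is the classical single-cache result, I would reconstruct the standard phase-based (``marking'') argument for the upper bound and an adversarial argument for the lower bound.

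For the upper bound, I would first partition the request sequence $R$ into consecutive \emph{phases}, where each phase is a maximal run of requests referencing at most $k$ distinct pages, a new phase beginning exactly when a $(k+1)$-th distinct page would appear. The key invariant is that within a single phase LRU faults at most $k$ times: once LRU loads a page during a phase, evicting it would require $k$ other distinct pages to be accessed more recently, which is impossible since the whole phase touches at most $k$ distinct pages. Hence each of the at most $k$ distinct pages in a phase triggers at most one fault, giving $\sfF_{\text{LRU}} \le k \cdot (\text{number of phases})$. I would then lower bound OPT by essentially one fault per phase boundary: the first request of phase $i+1$ lies outside the $k$ pages of phase $i$, so phase $i$ together with that request spans $k+1$ distinct pages, and any size-$k$ cache must fault at least once on such a window; a careful (non-double-counting) charging yields $\sfF_{\text{OPT}} \ge (\text{number of phases}) - 1$. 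Dividing the two bounds gives the ratio $k$ up to a vanishing additive term.

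For the lower bound, I would fix a working set of $k+1$ pages and let the adversary always request the unique page currently absent from the online algorithm's cache; this forces a fault on every request, so the online cost over $T$ requests is $T$. I would then bound OPT from above using Belady's rule of evicting the page whose next request is farthest in the future: after each OPT fault the evicted page is the last of the $k+1$ to be re-requested, so all $k$ remaining pages are touched before the next fault, spacing OPT's faults at least $k$ requests apart and giving $\sfF_{\text{OPT}} \le T/k + \OO(1)$. The ratio $T/(T/k + \OO(1)) \to k$ shows that every deterministic algorithm has competitive ratio at least $k$, so LRU, which meets this bound, is optimal.

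The main obstacle is the per-phase invariant for LRU together with the clean charging of OPT's faults to phases without double counting; equivalently, on the lower-bound side, the crux is arguing that Belady's eviction rule spaces OPT's faults at least $k$ requests apart. Both reduce to the same combinatorial fact, namely that a size-$k$ cache can be forced to fault precisely when $k+1$ distinct pages compete for it, so I would make that observation precise first and then reuse it in both directions.
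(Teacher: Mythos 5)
The paper gives no proof of this theorem at all; it is stated purely for completeness as a classical result with a citation to \cite{BookRaghavan}, so there is no in-paper argument to compare against. Your reconstruction is the standard and correct Sleator--Tarjan/textbook proof: the phase decomposition with the per-phase LRU invariant and the one-fault-per-phase charge to OPT for the upper bound, and the $(k+1)$-page adversary against a deterministic algorithm combined with Belady's farthest-in-future rule spacing OPT's faults $k$ apart for the matching lower bound. Both halves are sound as sketched, including the two subtleties you flag (avoiding double counting across phase windows, and the fact that maximality forces each non-final phase to contain exactly $k$ distinct pages).
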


Theorem \ref{thm:singlecache} tells us that the worst case input setting for a single cache is not degenerate, and the performance of LRU is bounded even when an adversary can choose the request sequence sequence. We will show in Theorem \ref{theorem:competitive_ratio} that this is not the case for the MCP problem with $m>1$, and show that the lower bound on the competitive ratio of any algorithm is unbounded.

\section{Lower Bound on the Competitive Ratio for MCP problem}

In this section, we present our first main result that shows that the competitive ratio of any online algorithm for solving the MCP is unbounded, which is in contrast to MCP with a single server.

\begin{theorem} 
	\label{theorem:competitive_ratio}
	The competitive ratio of any online algorithm for solving MCP with $m>1$ is unbounded.
\end{theorem}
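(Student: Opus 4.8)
The plan is to exhibit, for the smallest nontrivial case $m=2$, an adaptive adversary that forces any (deterministic) online algorithm to fault in \emph{every} time-slot while the offline optimum faults only a constant number of times; letting the horizon $T\to\infty$ then drives $\rho_{\text{ALG}}=\sfF_{\text{ALG}}/\sfF_{\text{OPT}}$ to infinity. The extra ingredient that the single-cache problem lacks, and that I want to exploit, is the \emph{matching} decision: even when an algorithm stores exactly the right contents, it may be forced to fault because the two requested items sit in the \emph{same} cache and cannot be split across the two caches in one slot.

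Concretely, take $k=2$ and a catalogue of four contents partitioned into two groups $A=\{a,b\}$ and $B=\{c,d\}$. I restrict the adversary to ``compatible'' request batches, namely $\br\in\{(a,c),(a,d),(b,c),(b,d)\}$ together with their reorderings. The first key fact to establish is that the fixed configuration $\sfC^\star=(\{a,b\},\{c,d\})$ serves every compatible batch with zero faults (one requested item lies in each cache), so the offline optimum simply reaches $\sfC^\star$ in $O(1)$ faults and never faults again; hence $\sfF_{\text{OPT}}=O(1)$. The second key fact, which I would verify by a short case analysis over the finite configuration space, is the converse: $\sfC^\star$ (up to swapping the two caches) is the \emph{only} configuration that is fault-free on all four compatible batches. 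Equivalently, from any other configuration there is a compatible batch on which the online caches cannot separate the two requested items across the two caches, forcing a fault.

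Given these two facts, the adversary's strategy writes itself: at each slot it inspects the online state $\sfC(t)$ and, as long as it is not the good split, issues a compatible batch that forces a fault, so $\sfF_{\text{ALG}}$ grows linearly in $T$. The main obstacle, and the step I would spend the most care on, is proving the \emph{invariant} that the adversary can keep the online algorithm out of the fault-free split forever while using \emph{only} compatible batches---this is essential, because a single incompatible batch would cost the optimum a fault and endanger the bound $\sfF_{\text{OPT}}=O(1)$. Since serving a forced compatible batch alters only one cache by a single element, reaching $\sfC^\star$ requires the online algorithm to be exactly one swap away and to be handed a batch that permits that swap; I would show that from every non-split configuration the adversary has a forcing batch all of whose legal fault-resolutions (over both the ejection and the matching choice) again land in a non-split configuration, so the online algorithm is trapped. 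This is a finite, mechanically checkable game, and it can be made robust by enlarging $A$, $B$, and $k$ so that the unique fault-free split is many swaps away from any reachable state.

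Finally, to pass from $m=2$ to arbitrary $m>1$, I would embed the above two-cache gadget and dedicate the remaining $m-2$ caches to private contents that are requested in every batch; any global matching that fails to align these private requests with their caches only creates additional online faults, so the optimal online behaviour reduces to the two-cache core while the optimum still pays only $O(1)$. For randomized algorithms the same gadget yields the bound through Yao's principle, by placing a distribution on the adversary's forcing choices rather than adapting to the realized state.
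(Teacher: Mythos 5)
Your overall strategy (force every online algorithm to fault in every slot while OPT pays $O(1)$, then let $T\to\infty$) matches the paper's, but your mechanism is different and, as proposed, it has a genuine gap at exactly the step you flag as the main obstacle: the trapping invariant is \emph{false} for the gadget you describe. Take $k=2$, $A=\{a,b\}$, $B=\{c,d\}$, and suppose the online state is $(\{a,c\},\{b,d\})$. The adversary must issue $(a,c)$ or $(b,d)$ to force a fault; on $(a,c)$ the algorithm may serve $a$ from cache $1$ and $c$ from cache $2$ evicting $b$, reaching $(\{a,c\},\{c,d\})$. From that state the only compatible batches that force a fault are $(b,c)$ and $(b,d)$ (since $b$ is the unique missing content and $a,d$ are already separated), and \emph{both} admit a resolution that lands in the fault-free split: e.g.\ on $(b,c)$ serve $c$ from cache $2$ (hit) and $b$ from cache $1$ evicting $c$, giving $(\{a,b\},\{c,d\})$. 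So the adversary either stops forcing faults or hands the algorithm the split, after which no compatible batch ever faults again and $\sfF_{\text{ALG}}=O(1)$. The structural reason is that every forced fault grants the algorithm one free eviction-plus-placement, i.e.\ one ``repair move'' toward $\sfC^\star$, and your adversary has no way to make these repair moves useless; enlarging $A$, $B$, $k$ does not obviously help, because the number of repair moves the algorithm receives grows exactly as fast as the number of faults you are trying to force. Until you exhibit a state class that is (a) closed under all legal fault-resolutions of some forcing batch and (b) disjoint from the fault-free splits, the construction does not go through, and in the smallest instance no such class exists.

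For contrast, the paper's construction avoids this trap differently: it first elicits a matching decision on a batch $(a_1,b_1)$ of two fresh contents, commits OPT to the \emph{opposite} matching, and then replays a fixed six-batch cycle in which one of the online caches is always able to serve one request as a hit and is therefore never updated, while the other cache absorbs a fault every slot. The forced faults all land in the ``churning'' cache and can never be used to repair the ``frozen'' cache, which is what keeps the online algorithm permanently separated from OPT's configuration. Your proposal needs an analogous mechanism for neutralizing the repair moves; your reduction of general $m>1$ to $m=2$ via private per-cache contents and the appeal to Yao's principle for randomized algorithms are reasonable in outline but are secondary to this missing core argument.
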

\begin{proof}
	We will prove it for the specific case of $m=2$ and $k=4$, which suffices for proving the Theorem. We will produce a sequence of input requests for which no matter what  an online algorithm does, the number of faults is unbounded, while an optimal offline algorithm makes only two faults.
	First we state two rules that any online algorithm must follow, since otherwise there is an easy construction of 'bad' sequences for which the competitive ratio is unbounded. For lack of space, the construction is omitted, however, it is quite natural and easy to do so.  
	\begin{enumerate}
		\item  Let the weight of an edge between the $i^{th}$ component of the input vector $\br(t)$  and cache $j$ be defined to be $1$ if cache $j$ contains input $i$, otherwise zero. Let $M(t)$ be a set of maximum weight matchings of the input vector $\br(t)$ and the contents of cache $\sfC(t)$ at time $t$.  
		Then any online algorithm at time $t$ must serve the input vector $\br(t)$ via any one of the maximum weight matchings.
		\item Once an online algorithm decides on a matching (which content will be served by which cache), then it should always eject the oldest arrived content in the cache. It will be clear from the following input construction, that  if the online algorithm ejects the most recently requested content to serve the current request, then the adversary can repeatedly ask for a two tuple of contents $a,b$, for which any online algorithm will repeatedly fault on each request. Similar argument can be extended for any other policy, which does not eject the oldest arrived content. More details can be found in \cite{BookRaghavan}. 
	\end{enumerate}
	
	Consider $m=2$ and $k=4$. Assume that the contents of the two caches at certain time are 
	$\mathsf{c}_1 = \{x_1, a_2, a_3, a_4\}$, and $\mathsf{c}_2 = \{x_2, b_2, b_3, b_4\}$. Without loss of generality, since files are only place holders, we will follow a convention that $a_k, b_k$ came earlier than $a_{k-1},b_{k-1}$ in cache $1$ and $2$, respectively, for $k=2,3,4$. Moreover, we assume that $x_i, i=1,2$ are the oldest contents in cache $i$, respectively, at this time.
	
	Let at next time slot, the request vector be $\{a_1, b_1\}$. Not knowing the future input sequence, let any online algorithm $A$ make the following allocation, serve $a_1$ from cache $1$, and $b_1$ from cache $2$, by evicting $x_1$ and $x_2$, respectively. Then $\mathsf{c}_1(A) = \{a_1, a_2, a_3, a_4\}$, and $\mathsf{c}_2(A) = \{b_1, b_2, b_3, b_4\}$. 
	Moreover, let the optimal offline algorithm, knowing the future input, serve $a_1$ from $\mathsf{c}_2$, and $b_1$ from $\mathsf{c}_1$, by evicting $x_2$ and $x_1$, respectively. Thus, the updated contents of an optimal offline algorithm are $\mathsf{c}_1(\mathsf{opt}) = \{b_1, a_2, a_3, a_4\}$, and $\mathsf{c}_2(\mathsf{opt}) = \{a_1, b_2, b_3, b_4\}$. Note that for competitive ratio definition, the adversary is allowed to give inputs depending on the current state of the algorithm $A$. If suppose, $A$ makes the opposite allocation, then we can replace the role of $a_1$ and $b_1$ in future input sequence, and get the same result as $a_1$ and $b_1$ are only place holders, as will be evident as follows.
	\begin{table}[h]
		\centering
		\begin{tabular}{| c | c | c | c | c | c | c | c | c |}
			\hline
			\text{input} & \multicolumn{4}{|c|} {$\mathsf{c}_1$} & \multicolumn{4}{|c|}{$\mathsf{c}_2$}  \\ \hline
			& $a_1$ & $a_2$ & $a_3$ & $a_4$  & $ b_1$ & $b_2$ & $b_3$ & $b_4$ \\\hline
			($a_1$,$a_2$)&  &  &  &  &  &    &  &   $a_2$  \\\hline
			($b_1$, $b_2$)  &  &  &  &$b_2$  & &    &  &   \\\hline
			($a_1$, $a_3$)&  &  &  &  & &    &   $a_3$ &  \\\hline
			($a_3$, $b_3$)&  &  &  &  & &     $b_3$ & & \\\hline
			($a_1$, $a_4$)  &  &  &  &  &     $a_4$&  & &  \\\hline
			($a_3$, $b_4$)&  &  &  &  & &    &  & $b_4$ \\\hline
			&  &  &  &  & &    &  &  \\\hline
			($a_1$, $a_2$)&  &  &  &  &     &  & $a_2$&  \\\hline
			($b_1$, $b_2$)&  &  &  &  &    &   $b_1$& & \\\hline
			($a_1$, $a_3$)&  &  &  &   &   $a_3$&  & & \\\hline
			($a_3$, $b_3$)&  &  &  & &    &  & &$b_3$  \\\hline
			($a_1$, $a_4$)&  &  &    & &    &  &$a_4$ & \\\hline
			($a_3$, $b_4$)&  &  &    & &    & $b_4$ & &  \\\hline
			&  &  &  &  & &     & &  \\\hline
			($a_1$, $a_2$)&  &  &  &   &    $a_2$&  & &  \\\hline
			($b_1, b_2$) &  &  &  &  & &    & &$b_1$ \\\hline
			($a_1, a_3$) &  &  &  &  & &      &$a_3$ &  \\\hline
			($a_3, b_3$)&  &  &  &  & &     $b_3$ & &  \\\hline
			($a_1, a_4$)&  &  &  &  &  $a_4 $  &  & & \\\hline
			($a_3, b_4$) &  &  &  &  &     &  & & $b_4$ \\\hline
			\hline
			\hline
		\end{tabular}
		\\ 
		\vspace{0.051in}
		Table $1$.
	\end{table}

	Now we construct a sequence of inputs so that the optimal offline algorithm does not have to incur any fault, while  any online algorithm $A$ makes at least one fault at each subsequent request. 
	Let  $\sigma=\{ (a_1, a_2), (b_1, b_2), (a_1, a_3), (b_2, b_3), (a_1, a_4), (b_4, b_2)\}$. 
	From hereon, we will use $R = \{\sigma, \sigma, \dots\}$ that uses $\sigma$  repeated infinitely many times, as the request vector sequence. 
	

	
	Without loss of generality, we follow that if the input is $(*_i, *_j), i<j, * \in \{a,b\}$ and $*_i,*_j \in \mathsf{c}_u(A)$ and $*_i,*_j \notin \mathsf{c}_v(A)$, we follow that $*_i$ is served from cache $u$, and $*_j$ is served from cache $v$ after evicting some content from $\mathsf{c}_v(A)$, since the content names are only place holders. 
	
	For input $R$ as described above, Table $1$ illustrates the evolving contents of the two caches with any online algorithm following rule 1 and 2 and our convention, where in any row we write the content that is entering the cache which replaces the oldest content of that cache placed in the corresponding column. For example, at time $2$, $b_2$ replaces $a_4$ in cache $1$.
	
	The main idea to notice from Table $1$ is that after one full run of $\sigma$, 
	contents of cache $1$, $\mathsf{c}_1$, are never changed and it always contains $\{a_1, a_2, a_3,b_2\}$. 
	The request sequence at each subsequent time is such that only one content can be served from $\mathsf{c}_1$ without any fault, while the other content is missing from $\mathsf{c}_2$ and at least one fault has to be made.  
	Since $\mathsf{c}_1$ is never updated, there is no way for any online algorithm to make cache contents $\mathsf{c}_1, \mathsf{c}_2$ equal to the offline algorithm's $\mathsf{c}_1(\mathsf{opt})$ and $\mathsf{c}_2(\mathsf{opt})$ upto a permutation. Moreover, it is easy to check that the full request sequence $R$ can be served by the optimal offline algorithm that has contents $\mathsf{c}_1(\mathsf{opt}) = \{b_1, a_2, a_3, a_4\}$, and $\mathsf{c}_2(\mathsf{opt}) = \{a_1, b_2, b_3, b_4\}$ without incurring any fault.

\end{proof}

{\it Discussion:} Theorem \ref{theorem:competitive_ratio} is a surprising result in light of Theorem \ref{thm:singlecache}, since it shows that even if there are only $2$ caches, no online algorithm can have a bounded competitive ratio. One would have hoped that the with $m>1$, the competitive ratio might grow as $mk$ following Theorem \ref{thm:singlecache}, but remarkably we show that MCP with $m>1$ is fundamentally different than the single cache problem, and the added decision of matching together with ejection makes it very hard for any online algorithm to stay close to the optimal offline algorithm. The main idea in proving Theorem \ref{theorem:competitive_ratio} is that if any online algorithm makes one mistake in matching the requests, then that algorithm can be forced to make repeated mistakes. 
This result also shows that MCP with $m>1$ and $k$-sized caches is not related to MCP with single cache and memory size of $mk$, since otherwise the LRU algorithm would achieve a competitive ratio of at most $mk$. 
Therefore for MCP with multiple caches, the arbitrary or adversarial request setting is too pessimistic/degenerate. Consequently, in the next section, we study the MCP under a stochastic arrival model, that is also well motivated from practical applications, to better understand its fundamental performance limits.

\section{I.I.D. Stochastic Arrivals}


\subsection{Known Popularity}
\begin{assumption}
	\label{ass:IID_arrivalprocess}
	I.I.D. Arrival Process \\
	The content catalog consists of $n$ contents denoted by $C_i$, $i=$ 1, 2, .., $n$. As before, vector $\br(t)$ ( $r_1$, $r_2$, .., $r_m$) arrives at the beginning of each time-slot, where each individual  request is i.i.d., with 
	$$ \prob{r_j= C_i} = p_i, \forall i, j.$$
	Without loss of generality, we assume that $p_1 \geq p_2 \geq ... \geq p_n$.
	
	Content popularity follows the Zipf Law \cite{zipf-wiki} for many  VoD services \cite{Gill07,BC99, YZ06, IRF04, VA02}. Zipf's law states that if contents are indexed in decreasing order of popularity, i.e., $C_i$ is the $i^{\text{th}}$ most popular content, the popularity of $C_i$: $p_i \propto i^{-\beta}, \text{ for a constant }\beta > 0.$ 
\end{assumption}

We consider a distributed caching policy called Cache Most Popular (CMP). Under this policy, the matching between requests and cache is fixed, i.e., in each time-slot, request $j$ ($r_j$) is served by cache $j$. Each cache stores the same $k$ most popular contents to begin with. If $r_j$ is not stored in the cache, Cache $j$ replaces the least popular currently cached content with $r_j$ to serve the request. Refer to Figure \ref{fig:MYOPIC} for a formal definition of the CMP policy. Let $C_i$ be arranged in decreasing order of popularity.

\begin{figure}[h]
	\hrule
	\vspace{0.1in}
	\begin{algorithmic}[1]
		
		\STATE Initialize: Store contents $C_1$, $C_2$, .., $C_k$ in each cache. 
		\STATE On arrival, $\forall j$, allocate $r_j$ to Cache $j$ \textbf{do},
		\IF {$r_j$ not stored in cache $j$,}
		\STATE replace the least popular content in cache $j$ with $r_j$.
		\ENDIF
	\end{algorithmic}
	\vspace{0.1in}
	\hrule
	\caption{CACHE MOST POPULAR (CMP) -- \sl An adaptive caching policy which caches the most popular content.}
	\label{fig:MYOPIC}
\end{figure}

\begin{remark} 
	\label{remark:CMP_properties}
	The CMP policy has the following properties:
	\begin{enumerate}
		\item[(i)] The routing decision (matching requests and caches) is fixed, and therefore independent of the current state of the caches.
		\item[(ii)] Each cache makes its decisions (which content to eject) independent of the other caches in the system.  
		\item[(iii)] At any instant, the $k-1$ most popular contents are stored by all $m$ caches. 
		\item[(iv)] At any instant, the maximum number of unique contents stored across all $m$ caches is $k+m-1$.		
	\end{enumerate}
\end{remark}

We now compare the performance of the CMP policy with the offline optimal policy (OPT).

\begin{theorem}
	\label{thm:zipf_iid}
	The competitive ratio of CMP with Zipf distribution is as follows.
	\noindent If $k>1$ and $mk \leq \alpha n$ for a constant $\alpha < 1$, 
	\begin{enumerate}
		\item[(i)] If $\beta \leq 1$, $\displaystyle \limsup_{n \rightarrow \infty} \dfrac{\expect{\sfF_{\text{CMP}}(T)}}{\expect{\sfF_{\text{OPT}}(T)}} = \Theta(1)$.
		\item[(ii)] If $\beta > 1$, 
		$\displaystyle \limsup_{n \rightarrow \infty} \dfrac{\expect{\sfF_{\text{CMP}}(T)}}{\expect{\sfF_{\text{OPT}}(T)}} = \OO(m^{\beta-1}).$
	\end{enumerate}
	If $k>1$ and $\beta < 1$, if $mk = \oo(n)$,  $\displaystyle  \limsup_{n \rightarrow \infty} \dfrac{\expect{\sfF_{\text{CMP}}(T)}}{\expect{\sfF_{\text{OPT}}(T)}} = 1$.
	
\end{theorem}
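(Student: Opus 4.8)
The plan is to prove the last statement by a sandwich that, in this regime, collapses the ratio to $1$ because even the offline optimum is forced to fault on almost every request; in particular the detailed CMP fault analysis needed for parts (i)--(ii) is \emph{not} required here. Two of the three ingredients are immediate. First, since OPT is the offline optimum, $\expect{\sfF_{\text{OPT}}(T)} \le \expect{\sfF_{\text{CMP}}(T)}$ on every sample path, so the ratio is always $\ge 1$. Second, each of the $mT$ requests over the horizon can trigger at most one fault, so $\expect{\sfF_{\text{CMP}}(T)} \le mT$ trivially. Hence it suffices to prove the single lower bound $\expect{\sfF_{\text{OPT}}(T)} \ge mT(1-\oo(1))$, i.e.\ that OPT's \emph{hit} rate vanishes; the ratio is then squeezed into $[1,(1-\oo(1))^{-1}]$, which tends to $1$.

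To lower-bound OPT I would first drop the matching constraint and pool all caches into one cache of size $mk$; removing a constraint can only reduce faults, so this pooled optimum lower-bounds $\expect{\sfF_{\text{OPT}}(T)}$. In the pooled model every request is either a hit or a fault, so $\sfF_{\text{pool}} = mT - H$ where $H$ is the number of hits, and it remains to show $\expect{H} = \oo(mT)$. The governing budget is that at most $mk$ contents are cached in each of the $T$ slots, so the total cache-time satisfies $\sum_i a_i \le mkT$, where $a_i$ is the number of slots in which $C_i$ is cached. The central estimate is that the hits harvested on a content are controlled by its popularity-weighted cache-time, $\expect{\text{hits}_i} \le (1+\oo(1))\,m p_i a_i$; allocating the budget to the most popular contents then gives $\expect{H} \le (1+\oo(1))\,mT\sum_{i\le mk} p_i$. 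For the Zipf law with $\beta<1$, the integral comparison $\sum_{i=1}^{x} i^{-\beta}\sim x^{1-\beta}/(1-\beta)$ yields $\sum_{i\le mk} p_i = (mk/n)^{1-\beta}(1+\oo(1))$, which tends to $0$ because $mk=\oo(n)$. Thus $\expect{H}=\oo(mT)$ and $\expect{\sfF_{\text{OPT}}(T)}\ge mT(1-\oo(1))$, closing the sandwich.

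The main obstacle is justifying the per-content hit bound $\expect{\text{hits}_i}\le(1+\oo(1))\,m p_i a_i$ against an \emph{offline} optimum, which sees the whole request sequence and may choose when to cache $C_i$ using future information. The property to exploit is that caching is ``sticky'': a hit on $C_i$ requires $C_i$ to have been held since an earlier fetch, so each maximal cached interval begins with a fault whose triggering request is \emph{not} a hit, and the remaining hits in that interval are exactly the i.i.d.\ requests landing in a contiguous window. Since the requests are i.i.d., OPT cannot cluster them into short cheap windows: the count in any contiguous interval concentrates around $m p_i\times(\text{length})$, and the excess available from adaptively placing windows is cancelled, after subtracting the one triggering fault per interval, by the fact that short windows yield only triggers and no net hits. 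Making this cancellation precise through a maximal/union-bound concentration inequality for i.i.d.\ streams, uniformly over the admissible (interval-structured, per-slot-size-$\le mk$) cache trajectories and for $T$ large enough that the fluctuation terms are $\oo(mT)$, is the delicate step; the Zipf asymptotics and the two trivial inequalities are then routine.
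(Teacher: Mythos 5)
Your proposal covers only the final claim of the theorem (the $\beta<1$, $mk=\oo(n)$ case) and explicitly sets aside parts (i) and (ii), so as written it does not prove the stated theorem. For (i) and (ii) your trivial bound $\expect{\sfF_{\text{CMP}}(T)}\le mT$ is far too weak: when $\beta>1$ both CMP and OPT fault on only a vanishing fraction of requests, and the $\OO(m^{\beta-1})$ bound must come from comparing the two tail masses. The paper's proof does exactly this with two one-line lemmas: CMP always retains the $k-1$ most popular contents, so its per-request fault probability is at most $\sum_{i\ge k}p_i$ (Lemma \ref{lemma:upperbound_CMP}), while any policy caches at most $mk$ distinct contents at a time, so OPT's per-request fault probability is at least $\sum_{i=mk+1}^{n}p_i$ (Lemma \ref{lemma:lowerbound_OPT}); all three claims then reduce to Zipf tail asymptotics applied to the ratio of these two sums.

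For the one claim you do address, your central insight --- that for $\beta<1$ and $mk=\oo(n)$ even the offline optimum must fault on a $1-\oo(1)$ fraction of requests --- is the same as the paper's, but you reach it by a much harder route (pooled cache, cache-time budget $\sum_i a_i\le mkT$, per-content hit bounds, maximal concentration over admissible cache trajectories), and the step you yourself flag as delicate, namely $\expect{\mathrm{hits}_i}\le(1+\oo(1))\,m p_i a_i$ against an offline optimum that places its cached intervals using future knowledge, is never actually established; as it stands this part is a program rather than a proof. The paper gets the same conclusion per slot with no concentration at all: whatever set of at most $mk$ contents is cached, the request mass outside it is at least $\sum_{i=mk+1}^{n}p_i=1-\Theta\left((mk/n)^{1-\beta}\right)=1-\oo(1)$, hence $\expect{\sfF_{\text{OPT}}(t)}\ge m(1-\oo(1))$. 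Your implicit concern --- that OPT's cache state may be correlated with the current request --- is legitimate and is in fact also glossed over in the paper's Lemma \ref{lemma:lowerbound_OPT}, but your budget-and-concentration machinery does not resolve it either; it reappears unchanged inside your ``sticky interval'' step. I would recommend adopting the per-slot tail-mass sandwich and, if you wish, treating the adaptivity point as a separate (brief) remark.
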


We thus conclude that for $\beta<1$, CMP is asymptotically optimal if $mk = \oo(n)$, i.e., 
the total cache storage is a vanishing fraction of the total content catalog, which is a realistic setting since typically the number of contents is much larger than the total storage space. This includes the case when $\beta = 0.8$ which is a good match for web pages \cite{fricker2012impact}. Moreover, notice that we get this result even though CMP is a distributed caching policy which uses a fixed matching between requests and caches and each cache makes its decisions independently. 

For $\beta>1$, the competitive ratio of CMP scales sub-linearly with the number of servers $m^{\beta -1}$, but importantly is independent of the size of each cache $k$. For $\beta = 1.2$, which has been observed to be a good match for Video on Demand (VoD) services \cite{fricker2012impact}, the competitive ratio scales as $m^{0.2}$, which is still reasonable.

\subsection{Unknown content popularity distribution}
To generalize the model even more, we assume that the underlying content popularity is Zipf distributed but that is unknown to the algorithm. In this case, we consider that each cache implements the Least Recently Used (LRU) policy. As in the CMP policy, the matching between requests and cache is fixed, i.e., in each time-slot, request $j$ ($r_j$) is served by cache $j$. If $r_j$ is not stored in the cache, Cache $j$ replaces the least recently used cached content with $r_j$ to serve the request. Refer to Figure \ref{fig:LRU} for a formal definition of the LRU policy. 

\begin{figure}[h]
	\hrule
	\vspace{0.1in}
	\begin{algorithmic}[1]
		\STATE Initialize: Start with $m$ empty caches.  
		\STATE On arrival, allocate $r_j$ to Cache $j$ \textbf{do},
		\IF {$r_j$ not stored in Cache $j$,}
		\IF {Cache $j$ already has $k$ contents}
		\STATE replace the least recently requested content with $r_j$.
		\ELSE
		\STATE store $r_j$ in Cache $j$
		\ENDIF
		\ENDIF
	\end{algorithmic}
	\vspace{0.1in}
	\hrule
	\caption{LEAST RECENTLY USED (LRU) -- \sl An adaptive caching policy which changes the content cached in a greedy manner without using the knowledge of request arrival statistics.}
	\label{fig:LRU}
\end{figure}

\begin{theorem}
	\label{theorem:unknown_zipf}
	When the underlying distribution is Zipf but that is unknown to the algorithm, the LRU based policy's competitive ratio for large enough $k$ is given by, 
	\begin{eqnarray*}
		\displaystyle \displaystyle \limsup_{n \rightarrow \infty} \dfrac{\expect{\sfF_{\text{LRU}}(T)}}{\expect{\sfF_{\text{OPT}}(T)}} = \OO(m^{\beta-1}(\log k)^{2-2/\beta} ).
	\end{eqnarray*}
\end{theorem}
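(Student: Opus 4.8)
The plan is to bound the per-cache, per-slot fault probability of LRU and then piggyback on the already-established competitive ratio of CMP (Theorem~\ref{thm:zipf_iid}). Because the matching is fixed ($r_j$ is always routed to cache $j$) and the components of $\br(t)$ are i.i.d., each of the $m$ caches independently processes an i.i.d. Zipf stream under LRU, and by linearity $\expect{\sfF_{\text{LRU}}(T)} = m\sum_t q_t$, where $q_t$ is the fault probability of a single cache in slot $t$. I would argue that, up to lower-order transient terms, $q_t$ is dominated by the stationary single-cache LRU miss probability $q_{\text{LRU}}$, reducing the theorem to
\[
q_{\text{LRU}} = \OO\!\left(k^{1-\beta}(\log k)^{2-2/\beta}\right).
\]
Since the per-cache CMP miss probability is $\Theta(\sum_{i>k}p_i)=\Theta(k^{1-\beta})$ (the quantity underlying Theorem~\ref{thm:zipf_iid}), this yields $\expect{\sfF_{\text{LRU}}(T)}/\expect{\sfF_{\text{CMP}}(T)} = \OO((\log k)^{2-2/\beta})$, and multiplying by $\expect{\sfF_{\text{CMP}}(T)}/\expect{\sfF_{\text{OPT}}(T)} = \OO(m^{\beta-1})$ from Theorem~\ref{thm:zipf_iid} (under the same standing assumption $mk\le\alpha n$) gives the claim.

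The heart of the argument is the bound on $q_{\text{LRU}}$, for which I would invoke the LRU stack-distance characterization: content $C_i$ is a miss exactly when the number of \emph{distinct} contents requested since its previous request is at least $k$. Fix a threshold $\kappa=\Theta(k/(\log k)^{2/\beta})$ and split the catalog into popular contents $\{C_1,\dots,C_\kappa\}$ and unpopular contents $\{C_{\kappa+1},\dots\}$. For the unpopular contents I bound the miss probability crudely by the request probability, contributing at most $\sum_{i>\kappa}p_i = \Theta(\kappa^{1-\beta}) = \Theta(k^{1-\beta}(\log k)^{2-2/\beta})$; this is the dominant term and is precisely what fixes the exponent $2-2/\beta$ of the logarithm. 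The remaining task is to show the popular contents contribute only a lower-order term.

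For a popular content $C_i$ with $i\le\kappa$, I would control its stack distance by conditioning on the gap $G_i$ (geometric with parameter $p_i$) back to its previous request and choosing a window length $W=\Theta(\kappa^\beta\log k)=\Theta((1/p_\kappa)\log k)$. Two tail estimates then suffice: (a) $\prob{G_i>W}\le e^{-p_iW}\le k^{-\Theta(1)}$ since $p_i\ge p_\kappa$; and (b) conditioned on $G_i\le W$, the stack distance is at most the number of distinct contents $D(W)$ appearing in a window of $W$ i.i.d. requests, whose mean is $\Theta(\kappa(\log k)^{1/\beta}) = \Theta(k/(\log k)^{1/\beta})\le k/2$ for large $k$, so a concentration bound gives $\prob{D(W)\ge k}\le k^{-\Theta(1)}$. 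Hence each popular content misses with probability $k^{-\Theta(1)}$, and $\sum_{i\le\kappa}p_i\,\prob{\text{miss}_i}=\OO(k^{-\Theta(1)})$, negligible against the unpopular term. The $(\log k)^{1/\beta}$ safety margin built into the choice of $\kappa$ is exactly what drives both tail probabilities below any prescribed polynomial in $k$.

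The main obstacle is step (b): when the window length is a fixed $W$, the events ``content $j$ appears in the window'' are not independent (the number of draws is constrained), so a Chernoff bound cannot be applied to $D(W)$ directly. I would resolve this either by Poissonizing the window length, which renders the per-content appearance indicators independent, or by appealing to negative association, both of which preserve the required upper-tail concentration; the Zipf tail estimate $\sum_{i>r}p_i=\Theta(r^{1-\beta})$ is used throughout to evaluate $\expect{D(W)}$ and to verify the threshold relations. A secondary technical point is the transient/cold-miss accounting needed to pass from the stationary miss probability to the finite-$T$ expectation, together with taking $n\to\infty$ so that $p_i\to i^{-\beta}/\zeta(\beta)$; these contribute only lower-order corrections and do not affect the stated order.
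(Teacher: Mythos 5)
Your proposal is correct and arrives at the paper's bound, but it packages the key step differently. The paper argues globally about the cache state: it introduces a window of the last $k^{\beta}/\log\log k$ requests to a cache and two high-probability events --- $E_1$, that only $\oo(k)$ distinct contents appear in the window (Lemma~\ref{lemma:E_1}), and $E_2$, that each of the $k/(\log k)^{2/\beta}$ most popular contents appears at least once (Lemma~\ref{lemma:E_2}) --- and concludes that on $E_1\cap E_2$ the LRU cache contains all of the top $k/(\log k)^{2/\beta}$ contents, so the per-slot miss probability is at most $\sum_{i>k/(\log k)^{2/\beta}}p_i$ plus the negligible failure probabilities; it then divides directly by the OPT lower bound of Lemma~\ref{lemma:lowerbound_OPT} rather than factoring through CMP (an equivalent route, since Theorem~\ref{thm:zipf_iid}'s denominator is that same lemma). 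Your per-content stack-distance argument is the local version of the same mechanism: your gap bound $\prob{G_i>W}$ plays the role of $E_2$, your distinct-count bound $\prob{D(W)\ge k}$ plays the role of $E_1$, and the threshold $\kappa=k/(\log k)^{2/\beta}$ and dominant term $\sum_{i>\kappa}p_i$ that fix the exponent $2-2/\beta$ are identical. Two differences are worth recording. First, the dependence obstacle you flag in step (b) is real, but the paper avoids it with a more elementary trick than Poissonization or negative association: it applies Chernoff to the number of \emph{requests} to contents with index above $k/(\log\log k)^{1/\beta}$ (a sum of i.i.d.\ indicators), upper-bounds the number of distinct rare contents by that request count, and counts the at most $k/(\log\log k)^{1/\beta}=\oo(k)$ popular indices wholesale; you could adopt this and drop the heavier machinery. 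Second, your tail estimate $e^{-p_iW}=k^{-\Theta(1)}$ must be driven below $k^{1-\beta}(\log k)^{2-2/\beta}$, which for $\beta\ge 2$ requires taking a sufficiently large constant in $W=(C/p_{\kappa})\log k$; this is the knob that controls the polynomial, not the $(\log k)^{1/\beta}$ margin in $\kappa$, so make that constant explicit. With those two adjustments your outline fills in to a complete proof.
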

%
For technical reasons, we do not have corresponding result for $\beta < 1$.

\section{Time-Correlated Arrivals} Finally, we consider the case when content requests are time-correlated, and propose a 
model that captures the successive video file demands in practical systems. The model is not the most general, 
however, is a tradeoff between a most general model and an analytically tractable one. 
\vspace{0.1in}

Model: Recall that we have $m$ streams of requests. We assume that each stream is generated as follows: 

\begin{assumption} Time-Correlated Arrivals \\
	\label{ass:correlated_general}
	Let $S$ be the set of all possible strings of contents of arbitrary length in the catalog, and $X(s)$, $\forall s \in S$ be a distribution on the elements of $S$. 
	\begin{enumerate}
		\item[--] At time $0$, a string of length $b_0$ is picked according to $X$. These $b_0$ contents are requested in sequence, 1 content per time-slot in the next $b_t$ time-slots. 
		\item[--] At the end of the previous request sequence i.e. at time $\sum_{i=0}^t b_i$, a new sequence of length $b_{t+1}$  is picked from the same distribution. The process is repeated at end of each such sequence.
		\item[--] The sequence is revealed to the system in an online manner (1 content per time-slot).
	\end{enumerate} 
	We call each string of length $b_i$ as a sub-sequence, which together make the whole sequence $r_j$ for each of the $m$ streams.
\end{assumption}

\noindent For each content $i$, we define a new quantity $\tilde{p}_i$ as follows:
\begin{eqnarray*}
	\tilde{p}_i = \expect{\text{\# of times content }C_i \text{ appears in the sub-sequence}}.
\end{eqnarray*}
Without loss of generality, we assume that $\tilde{p}_1 \geq \tilde{p}_2 \geq ... \geq \tilde{p}_n$. Let $Z(T)$ be the number of request sub-sequences which end during or before time-slot $T$ summed across the $m$ streams, i.e., for each 
stream $j$, define  $t_j = \max\{t: \sum_{i=0}^t b_i(j) < T\}$. Then $Z(T) = \sum_{j=1}^m t_j$. Moreover, 
define $L$ to be the length of a requested sub-sequence for any of the $m$ streams. 

Then we have the following result on the performance of CMP for this setting, where the system is in operation for total of $T$ slots.
\begin{theorem}
	\label{theorem:markov_competitiveratio}
	Under Assumption \ref{ass:correlated_general}, the competitive ratio of the CMP policy until time $T$ is given by 
	\begin{eqnarray*}
		\dfrac{\expect{\sfF_{\text{CMP}}(T)}}{\expect{\sfF_{\text{OPT}}(T)}} \leq \bigg( 1 + \dfrac{m}{\expect{Z(T)}} \bigg) \dfrac{\sum_{i=k}^{n} \tilde{p}_i }{\sum_{i=m(k+\expect{L})+1}^{n} \tilde{p}_i}.
	\end{eqnarray*}
\end{theorem}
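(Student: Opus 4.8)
The plan is to bound the numerator and denominator of the competitive ratio separately, converting in both cases between the number of \emph{faults}, the number of \emph{requests} for unpopular contents, and the quantities $\tilde{p}_i$. The bridge between request counts and $\tilde p_i$ is the renewal structure of Assumption \ref{ass:correlated_general}: each stream is an i.i.d.\ concatenation of sub-sequences, so $\tilde p_i$ is exactly the expected number of appearances of $C_i$ per sub-sequence, and by linearity the expected number of requests of rank $\ge k$ in one sub-sequence is $\sum_{i\ge k}\tilde p_i$. Throughout I would use the ranking $\tilde p_1 \ge \cdots \ge \tilde p_n$, so that ``popular'' means small index.

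\textbf{Upper bound on CMP.} By Remark \ref{remark:CMP_properties}(iii), every CMP cache permanently contains $C_1,\dots,C_{k-1}$ and differs only in a single variable slot, so CMP can fault at cache $j$ only on a request whose content has rank $\ge k$; hence $\sfF_{\text{CMP}}(T) \le \sum_{j=1}^m \big(\#\text{ requests of rank} \ge k \text{ in stream } j\big)$. For each stream $j$ the requests up to time $T$ consist of $t_j$ completed sub-sequences plus at most one in-progress sub-sequence; bounding the partial one by a full sub-sequence and applying a Wald-type identity over the (stopping-time) count $t_j$ gives $\expect{\sfF_{\text{CMP}}(T)} \le (\expect{Z(T)}+m)\sum_{i=k}^n \tilde p_i$. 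Factoring out $\expect{Z(T)}$ produces the $\big(1+\tfrac{m}{\expect{Z(T)}}\big)$ prefactor and the numerator $\sum_{i=k}^n\tilde p_i$, the additive $m$ being precisely the boundary cost of the $m$ unfinished sub-sequences.

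\textbf{Lower bound on OPT.} It remains to show $\expect{\sfF_{\text{OPT}}(T)} \ge \expect{Z(T)}\sum_{i=m(k+\expect{L})+1}^{n}\tilde p_i$, which is where the threshold $m(k+\expect{L})$ enters. The idea is that OPT, even with arbitrary matching and eviction, holds at most $mk$ distinct contents at any instant; the extra $m\expect{L}$ accounts for the contents OPT can usefully pre-load from the $m$ sub-sequences that are simultaneously active across the streams. A content of rank exceeding $mk+m\expect{L}$ is then too unpopular to be retained against the volume of distinct requests generated by these active sub-sequences, so (in expectation, amortised over the $\expect{Z(T)}$ completed sub-sequences) each of its appearances forces a fresh fetch; summing over such contents yields the claimed bound, and discarding the unfinished sub-sequences at the horizon keeps the inequality in the correct (lower-bound) direction.

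Combining the two displays and cancelling $\expect{Z(T)}$ gives the stated ratio. I expect the real difficulty to lie entirely in the OPT lower bound: one must justify the threshold $m(k+\expect{L})$ rigorously and, above all, control the temporal correlation \emph{within} a sub-sequence, where a content may repeat with no intervening requests and hence be served by a single fetch. Making the counting argument that turns ``distinct retainable contents'' into the additive $m\expect{L}$ term watertight, together with the Wald-type handling of the random number of sub-sequences and of the partial sub-sequences at time $T$, is the delicate part; the CMP upper bound, by contrast, follows almost directly from Remark \ref{remark:CMP_properties}.
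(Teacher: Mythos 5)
Your proposal follows essentially the same route as the paper: decompose the horizon into request sub-sequences, bound CMP's faults per sub-sequence by $\sum_{i=k}^n\tilde p_i$ via Remark \ref{remark:CMP_properties}(iii), bound OPT's faults per sub-sequence from below by $\sum_{i=m(k+\expect{L})+1}^n\tilde p_i$, and account for the at most $m$ unfinished sub-sequences at time $T$ by comparing $Z(T)+m$ CMP terms against $Z(T)$ OPT terms. The CMP half is complete as you state it.

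The part you flag as delicate --- the OPT lower bound --- is where your sketch stops short, and the paper closes it with two concrete steps you do not supply. First, the threshold is not about contents being ``too unpopular to be retained''; it is a pure cardinality argument: since at most $m$ requests arrive per slot, $|\sfC(t+1)\setminus\sfC(t)|\le m$, so over the $L$ slots of a sub-sequence the union of all contents ever held satisfies $\bigl|\bigcup_{w=t}^{t+L-1}\sfC(w)\bigr|\le mk+mL$, and every request for a content outside this union is a fault. Minimizing $\sum_{i\notin S}\tilde p_i$ over sets $S$ of size $m(k+L)$ gives the tail sum starting at index $m(k+L)+1$. Second, $L$ is random, so you cannot simply write $\expect{L}$ in the index: one needs $\expect{\sum_{i=m(k+L)+1}^n\tilde p_i}\ge\sum_{i=m(k+\expect{L})+1}^n\tilde p_i$, which holds by Jensen's inequality because $L\mapsto\sum_{i=m(k+L)+1}^n\tilde p_i$ is convex (its increments $-\sum_{i=m(k+L)+1}^{m(k+L+1)}\tilde p_i$ are nondecreasing since the $\tilde p_i$ are sorted in decreasing order). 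With these two steps your outline becomes the paper's proof; your worry about within-sub-sequence repetition is absorbed by the union-of-states formulation, since the bound counts only requests for contents never held during the sub-sequence.
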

Theorem \ref{theorem:markov_competitiveratio} is a general result for any correlated distribution satisfying Assumption \ref{ass:correlated_general}. 
To get more precise results, we next consider a specific correlated distribution, which is a partitioned Zipf correlated distribution and evaluate the performance of CMP with respect to it.  
\begin{assumption} The arrival process satisfies the following properties:
	\label{ass:recommendation_system}
	\begin{enumerate}
		\item[--] The $n$ contents are divided into $n/b$ groups of $b$ contents each, where $b$ is a constant $>1$. 
		\item[--] Each request sub-sequence consists of requests for contents in only one group and
		$\prob{\text{requested contents belong to group } l} \propto l^{-\beta}$,
		for a constant $\beta > 0$.
		\item[--] The request sub-sequence consists of $\min\{y, b\}$ distinct contents in the group, chosen uniformly at random, where $y$ is a Geometric random variable with parameter $\gamma$. 
	\end{enumerate}
\end{assumption}
With this model, sub-sequences within a group are correlated, while across groups, they follow a Zipf distribution. This models clustered content popularity, where certain clusters are more popular than others, while within each cluster there is correlation. With this model, we can get the following result for the CMP, with time horizon of $T$ slots.

\begin{corollary}
	\label{corollary:markov}
	If $k>1$, $mk = \oo(n)$, $b = \oo(k)$, and under Assumption \ref{ass:recommendation_system}, for $T \geq b$,
	\begin{enumerate}
		\item[(i)] If $\beta < 1$, $\displaystyle  \limsup_{n \rightarrow \infty} \dfrac{\expect{\sfF_{\text{CMP}}(T)}}{\expect{\sfF_{\text{OPT}}(T)}} = \bigg( 1 +  \bigg \lfloor \frac{T}{b} \bigg\rfloor ^{-1} \bigg)$.
		\item[(ii)] If $\beta = 1$, 
		$\displaystyle \limsup_{n \rightarrow \infty} \dfrac{\expect{\sfF_{\text{CMP}}(T)}}{\expect{\sfF_{\text{OPT}}(T)}} = \bigg( 1 + \bigg \lfloor \frac{T}{b} \bigg\rfloor ^{-1} \bigg) \Theta(1)$.
		\item[(iii)] If $\beta > 1, $\\
		$\displaystyle \limsup_{n \rightarrow \infty} \dfrac{\expect{\sfF_{\text{CMP}}(T)}}{\expect{\sfF_{\text{OPT}}(T)}} = \bigg( 1 + \bigg \lfloor \frac{T}{b} \bigg\rfloor ^{-1} \bigg) \OO(m^{\beta-1})$.
	\end{enumerate}
\end{corollary}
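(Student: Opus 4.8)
The plan is to specialize the general bound of Theorem~\ref{theorem:markov_competitiveratio} to the partitioned Zipf model of Assumption~\ref{ass:recommendation_system} by evaluating the two factors $\left(1+\frac{m}{\expect{Z(T)}}\right)$ and $\frac{\sum_{i=k}^{n}\tilde p_i}{\sum_{i=m(k+\expect{L})+1}^{n}\tilde p_i}$ explicitly and then multiplying. First I would compute $\tilde p_i$. A sub-sequence lands in group $l$ with probability $l^{-\beta}/H$, where $H := \sum_{l=1}^{n/b} l^{-\beta}$, and conditioned on group $l$ it is an unordered set of $\min\{y,b\}$ distinct contents drawn uniformly from the $b$ members of that group. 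By within-group symmetry every content of group $l$ is equally likely to appear, so for $C_i$ with $l=\lceil i/b\rceil$,
\begin{equation*}
\tilde p_i \;=\; \frac{l^{-\beta}}{H}\cdot\frac{\expect{\min\{y,b\}}}{b}, \qquad l = \Big\lceil \tfrac{i}{b}\Big\rceil .
\end{equation*}
Thus $\tilde p_i$ is piecewise constant in blocks of size $b$ and decays as $\lceil i/b\rceil^{-\beta}$, so the ordering $\tilde p_1\ge\cdots\ge\tilde p_n$ is exactly the group ordering. I would also record that $\expect{L}=\expect{\min\{y,b\}}$ is a constant (it is at most $\expect{y}=\OO(1)$ for geometric $y$), so that, since $b=\oo(k)$, the denominator's starting index satisfies $m(k+\expect{L})=mk(1+\oo(1))$, i.e.\ it sits in group $\approx mk/b$.

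Next I would evaluate the sum-ratio by collapsing the index sums into group sums. Up to the common constant $\frac{\expect{\min\{y,b\}}}{Hb}$, the numerator is $\approx b\sum_{l=l_1}^{N} l^{-\beta}$ and the denominator is $\approx b\sum_{l=l_2}^{N} l^{-\beta}$, where $N=n/b$, $l_1=\lceil k/b\rceil$ and $l_2=\lceil m(k+\expect{L})/b\rceil\approx mk/b$; the constant and the factor $b$ cancel in the ratio. Applying the standard asymptotics of $\sum_{l=a}^{N} l^{-\beta}$ with $N\to\infty$ then gives three regimes. For $\beta<1$ the sum is $\sim N^{1-\beta}/(1-\beta)$ and is insensitive to the lower limit whenever $a=\oo(N)$; since $mk=\oo(n)$ forces both $l_1$ and $l_2$ to be $\oo(N)$, numerator and denominator are asymptotically equal and the ratio tends to $1$. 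For $\beta=1$ the sum is $\sim\ln(N/a)$, so the ratio becomes $\ln(n/k)/\ln(n/(mk))$, which stays bounded, i.e.\ $\Theta(1)$, under the stated scaling. For $\beta>1$ the sum converges and is dominated by its lower end, $\sim a^{1-\beta}/(\beta-1)$, so the ratio is $\big((k/b)/(mk/b)\big)^{1-\beta}=m^{\beta-1}$, giving $\OO(m^{\beta-1})$.

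Finally I would dispose of the prefactor. By symmetry across the $m$ streams $\expect{Z(T)}=m\,\expect{t_1}$, and because every sub-sequence occupies $\min\{y,b\}\le b$ slots, at least $\lfloor T/b\rfloor$ sub-sequences of each stream complete within $T$ slots, so $\expect{Z(T)}\ge m\lfloor T/b\rfloor$ and therefore $1+m/\expect{Z(T)}\le 1+\lfloor T/b\rfloor^{-1}$. Multiplying this prefactor by the three ratios found above and invoking Theorem~\ref{theorem:markov_competitiveratio} yields exactly the three claimed bounds.

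The step I expect to be the main obstacle is the asymptotic bookkeeping in the sum-ratio, especially at the $\beta=1$ boundary and in making the group-sum (equivalently integral) comparisons uniform in the lower limit $l_2\approx mk/b$, which depends on both $m$ and $k$ and must simultaneously be shown to satisfy $l_2=\oo(N)$ and $l_2\to\infty$ so that the $\beta<1$ ``forget the lower limit'' step and the $\beta>1$ ``dominated by the lower limit'' step are both valid. Keeping the Euler--Maclaurin/integral error terms under control while tracking the cancellation of the constant $\expect{\min\{y,b\}}/(Hb)$ is where the care is needed; the remainder is direct substitution into Theorem~\ref{theorem:markov_competitiveratio}.
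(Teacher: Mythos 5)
Your proposal is correct and follows exactly the route the paper intends (the paper omits the corollary's proof, but its structure dictates specializing Theorem~\ref{theorem:markov_competitiveratio}): you compute $\tilde p_i$ by within-group symmetry, collapse the two tail sums into group sums with lower limits $\approx k/b$ and $\approx mk/b$, evaluate the ratio in the three Zipf regimes, and bound $\expect{Z(T)}\ge m\lfloor T/b\rfloor$ to control the prefactor. The asymptotic bookkeeping you flag (boundary groups, $\expect{L}=\OO(1)$, and the conditions $l_1\to\infty$, $l_2=\oo(n/b)$ guaranteed by $b=\oo(k)$ and $mk=\oo(n)$) all goes through, so there is no gap.
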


Comparing the performance of CMP with correlated Zipf distribution with respect to the Zipf (Theorem \ref{thm:zipf_iid}), we see that there is only an additional penalty term of $\big( 1 + \big \lfloor \frac{T}{b} \big\rfloor ^{-1} \big)$. Thus, the penalty decreases with time horizon $T$, but increases with the size of $b$, which is expected, since larger $b$ means more correlation.

Finally, we characterize the performance of LRU when the underlying distribution is  the correlated Zipf distribution, which is unknown to the algorithm.

\begin{theorem}
	\label{theorem:markov_unkown}
	If $mk = \oo(n)$, $\beta>1$, under Assumption  \ref{ass:recommendation_system} for $k$ large enough, 
	\begin{eqnarray*}
		\displaystyle \limsup_{n \rightarrow \infty} \dfrac{\expect{\sfF_{\text{LRU}}(T)}}{\expect{\sfF_{\text{OPT}}(T)}} = \bigg( 1 + \bigg \lfloor \frac{T}{b} \bigg\rfloor ^{-1} \bigg) \OO(m^{\beta-1}(\log k)^{2-2/\beta} ).
	\end{eqnarray*}
\end{theorem}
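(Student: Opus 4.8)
The plan is to reuse the fault decomposition behind Theorem~\ref{theorem:markov_competitiveratio} and Corollary~\ref{corollary:markov}, replacing only the caching-policy-dependent piece by the LRU analysis of Theorem~\ref{theorem:unknown_zipf}. The two factors in the claimed bound have disjoint origins: the penalty $(1+\lfloor T/b\rfloor^{-1})$ comes from sub-sequence boundary effects and from the lower bound on $\expect{\sfF_{\text{OPT}}(T)}$, while $\OO(m^{\beta-1}(\log k)^{2-2/\beta})$ comes from the per-request LRU miss rate. Since the offline optimum is independent of the online policy, the lower bound on $\expect{\sfF_{\text{OPT}}(T)}$ proved for the correlated CMP result carries over \emph{verbatim}; it is precisely this bound that yields the $(1+m/\expect{Z(T)})$ factor, specializing under Assumption~\ref{ass:recommendation_system} to $(1+\lfloor T/b\rfloor^{-1})$ exactly as in Corollary~\ref{corollary:markov}. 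The only genuinely new work is therefore an upper bound on $\expect{\sfF_{\text{LRU}}(T)}$ matching the IID-Zipf LRU fault count up to this same penalty.

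First I would coarse-grain the request process to the level of groups. Under Assumption~\ref{ass:recommendation_system} each sub-sequence is a set of distinct contents drawn uniformly from a single group, with the group index drawn Zipf$(\beta)$ independently across sub-sequences; hence the sequence of groups touched by any one stream is an i.i.d.\ Zipf$(\beta)$ process over the $n/b=\Theta(n)$ groups. Because $b$ is a constant, each group occupies only $\Theta(1)$ cache slots, so a cache of size $k$ holds $\Theta(k/b)=\Theta(k)$ whole groups, and the misses LRU incurs \emph{inside} a single sub-sequence are at most its number of distinct contents, i.e.\ at most $\expect{L}$ in expectation -- these coincide with the compulsory misses the optimum must also largely pay. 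The substantive misses are those caused when a group is requested whose block has been evicted since it was last touched.

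The main obstacle is that, unlike CMP whose cache contents are deterministic, LRU's state is random and, under correlated arrivals, correlated with the group currently being served, which blocks a direct reuse of the IID computation. I would break this coupling with a domination argument keyed on $b=\oo(k)$: a content of a just-requested group has been evicted only if at least $k$ distinct contents, hence at least $\Theta(k/b)$ distinct intervening group-touches, occurred since that group was last requested, and this inter-touch count is governed solely by the i.i.d.\ group-level Zipf process, not by the within-group ordering. Conditioning on that process decouples the within-sub-sequence randomness and reduces the eviction-probability estimate to LRU on the group-level IID Zipf instance with $\Theta(n)$ items and effective cache size $\Theta(k)$. Theorem~\ref{theorem:unknown_zipf} then applies with the same $\beta$ and $m$, and since $\log(k/b)=\Theta(\log k)$ for constant $b$ it yields $\expect{\sfF_{\text{LRU}}(T)}\le (1+\lfloor T/b\rfloor^{-1})\,\OO(m^{\beta-1}(\log k)^{2-2/\beta})\,\expect{\sfF_{\text{OPT}}(T)}$.

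Finally I would assemble the pieces as in Corollary~\ref{corollary:markov}: divide the LRU upper bound by the OPT lower bound, check that the $\tilde p_i$-ratios collapse to the IID-Zipf expressions under $mk=\oo(n)$ and $\beta>1$, and pass to $\limsup_{n\to\infty}$. The hypotheses ``$k$ large enough'' and $\beta>1$ are inherited directly from Theorem~\ref{theorem:unknown_zipf}, which is exactly why -- as in the IID case -- no counterpart is available for $\beta<1$.
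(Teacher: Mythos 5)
The paper omits the proof of this theorem entirely, stating only that it ``follows similarly to the Proof of Theorem \ref{theorem:unknown_zipf}''; your plan --- keeping the sub-sequence decomposition and the OPT lower bound from Theorem \ref{theorem:markov_competitiveratio}/Corollary \ref{corollary:markov}, which yield the $\big(1+\lfloor T/b\rfloor^{-1}\big)$ factor, and substituting group-level analogues of Lemmas \ref{lemma:E_1}--\ref{lemma:upperbound_LRU} for the CMP bound --- is exactly that route, and it is sound. The one point to tighten is the within-group sampling: since each sub-sequence requests only $\min\{y,b\}$ of the $b$ contents of its group, a single touch of a popular group does not guarantee that every content of that group is among the $k$ most recently used, so the analogue of Lemma \ref{lemma:E_2} should demand that each top group be touched $\Theta(\log k)$ times within the window (which the window length $k^{\beta}/\log\log k$ affords with ample slack) rather than once.
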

For lack of space, we omit the proof which follows similarly to the Proof of Theorem \ref{theorem:unknown_zipf}. 

\section{Simulations}

In Figure \ref{fig:diff_n_zipf}, we plot the upper bound on the competitive ratio of the CMP policy for the Zipf distribution as a function of the number of contents $(n)$ and the size of each cache $(k)$ with $m=10$. For all plots, we use $10^4$ time-slots. To evaluate the competitive ratio performance of any policy, we use the lower bound derived in Lemma \ref{lemma:lowerbound_OPT}.
As expected from Theorem \ref{thm:zipf_iid}, we see that the performance of CMP improves as $n$ increases as compared to $mk$, and worsens as the Zipf parameter $(\beta)$ increases.
\begin{figure}[h]
	\begin{center}
		\includegraphics[scale=0.24]{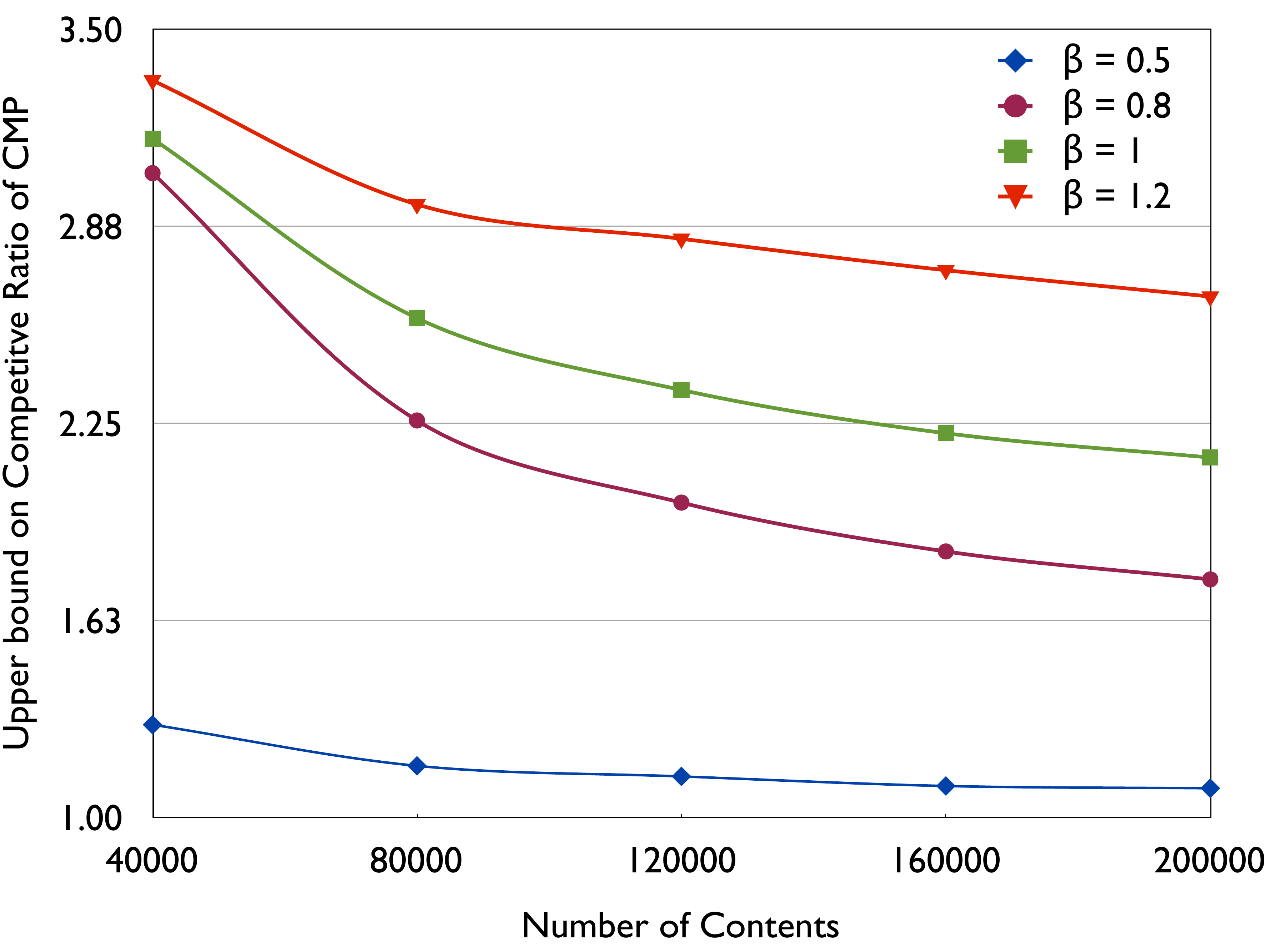}
		\caption{\sl The upper bound on the competitive ratio of the CMP policy for different number of contents $(n)$ and different values of the Zipf parameter $(\beta)$. \label{fig:diff_n_zipf}} 
	\end{center}
\end{figure}

In Figure \ref{fig:diff_m_zipf}, we plot the upper bound on the competitive ratio of the CMP policy for different number of caches $(m)$ and Zipf parameter $(\beta)$, where we fix the number of contents $(n)$ to $10000$. As expected, for fixed values of $n$ and $\beta$, the performance of CMP gets worse as the number of caches $(m)$ increases, and for fixed values of $n$ and $m$, smaller values of $\beta$ lead to better performance.  

\begin{figure}[h]
	\begin{center}
		\includegraphics[scale=0.24]{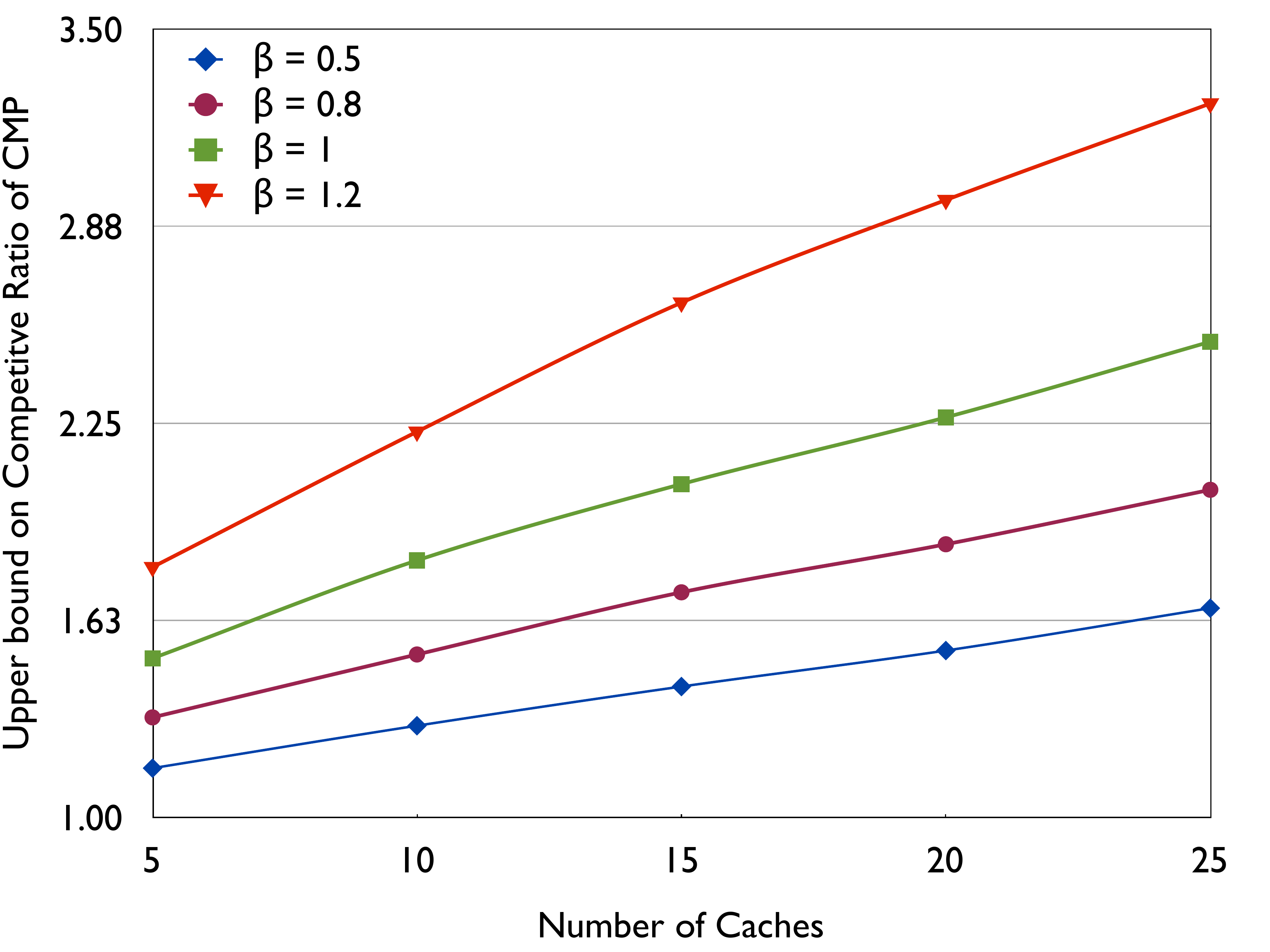}
		\caption{\sl The upper bound on the competitive ratio of the CMP policy for different number of caches $(m)$ and different values of the Zipf parameter $(\beta)$. \label{fig:diff_m_zipf}}  
	\end{center}
\end{figure}

In Figure \ref{fig:LRU_vs_CMP_zipf}, we compare the performance of the LRU policy with the CMP policy, when the underlying distribution is Zipf which is unknown to the LRU policy, while revealed for the CMP policy. We plot the fraction of requests which lead to faults as a function of the number of contents $(n)$ and the number of contents that can be stored in each cache $(k)$, with $m=10$.  The gap between the performance of the CMP policy and the LRU policy decreases as the Zipf parameter $(\beta)$ increases. For lack of space, we are unable to provide results for the time-correlated case, but they also follow the derived analytical results.
\begin{figure}[h]
	\begin{center}
		\includegraphics[scale=0.23]{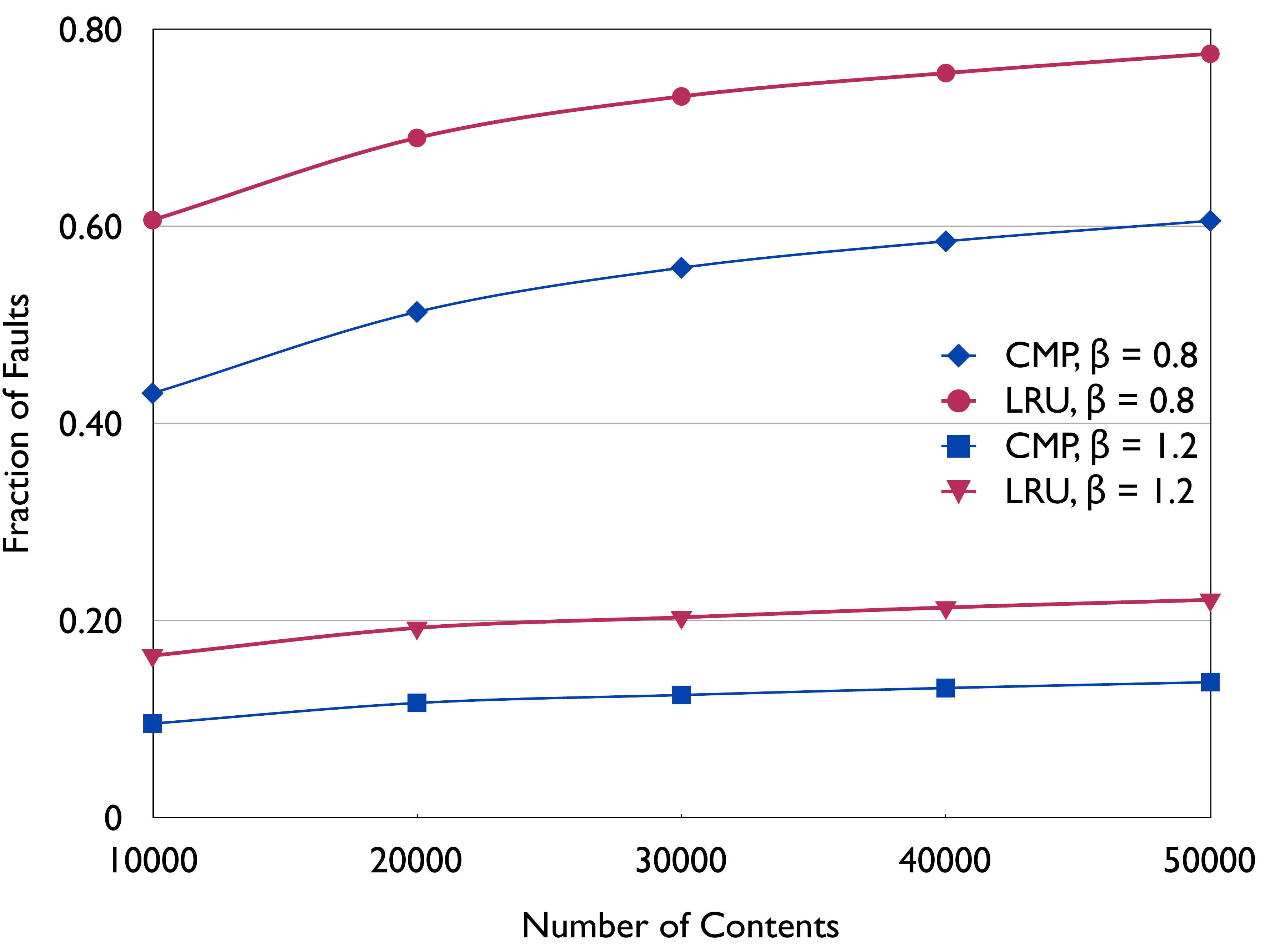}
		\caption{\sl The performance of the LRU policy and the CMP policy for different number of contents $(n)$ and different values of the Zipf parameter $(\beta)$. \label{fig:LRU_vs_CMP_zipf}} 
	\end{center}
\end{figure}

\section{Proofs}
\label{section:proof}

\subsection{Proof of Theorem \ref{thm:zipf_iid}}

\begin{lemma} 
	\label{lemma:upperbound_CMP}
	Let $\sfF_{\text{CMP}}(t)$ be the number of faults made by the CMP policy in time-slot $t$. Under Assumption \ref{ass:IID_arrivalprocess}, we have that, $m \sum_{i=k+1}^{n} p_i \leq \expect{\sfF_{\text{CMP}}(t)} \leq m \sum_{i=k}^{n} p_i.$
\end{lemma}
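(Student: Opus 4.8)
The plan is to reduce the whole computation to a single cache and then sum over the $m$ caches by linearity of expectation, which is legitimate because CMP uses a fixed matching (request $j$ always goes to cache $j$) and each cache acts independently (Remark \ref{remark:CMP_properties}, (i)--(ii)). So it suffices to bound $\prob{\text{cache } j \text{ faults at time } t}$ for a fixed $j$ and multiply by $m$.

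First I would establish the structural invariant that at the start of every time-slot $t$, cache $j$ stores the $k-1$ most popular contents $C_1,\dots,C_{k-1}$ together with exactly one extra ``slot'' content $S_j(t)$ whose index is at least $k$. This is essentially property (iii) of Remark \ref{remark:CMP_properties}, and I would justify it by induction on $t$: the cache is initialized with $C_1,\dots,C_k$ (so $S_j(1)=C_k$), and on any fault the ejected least-popular content is necessarily the current slot content, since each of $C_1,\dots,C_{k-1}$ is strictly more popular than any admissible slot content of index $\ge k$. Hence the top $k-1$ contents are never evicted and the slot index stays $\ge k$. Given the invariant, a fault occurs in cache $j$ at time $t$ exactly when $r_j(t)\in\{C_k,\dots,C_n\}$ and $r_j(t)\ne S_j(t)$.

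Since $S_j(t)$ is a deterministic function of the earlier requests $r_j(1),\dots,r_j(t-1)$, it is independent of $r_j(t)$ under Assumption \ref{ass:IID_arrivalprocess}, and conditioning on the slot content gives the key identity
\begin{eqnarray*}
	\prob{\text{cache } j \text{ faults at } t} = \sum_{i=k}^{n} p_i\,\bigl(1 - \prob{S_j(t)=C_i}\bigr).
\end{eqnarray*}
Both bounds then fall out of this. For the upper bound I would simply drop the nonnegative term $\prob{S_j(t)=C_i}$, leaving $\sum_{i=k}^n p_i$ per cache. For the lower bound I would use that the slot holds exactly one content of index $\ge k$, so $\sum_{i=k}^n \prob{S_j(t)=C_i}=1$, together with $p_i\le p_k$ for $i\ge k$; this gives $\sum_{i=k}^n p_i\,\prob{S_j(t)=C_i}\le p_k$, so the fault probability is at least $\sum_{i=k}^n p_i - p_k = \sum_{i=k+1}^n p_i$. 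Summing the per-cache bounds over $j=1,\dots,m$ yields $m\sum_{i=k+1}^n p_i \le \expect{\sfF_{\text{CMP}}(t)} \le m\sum_{i=k}^n p_i$.

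The main obstacle is the careful verification of the invariant and the accompanying independence claim; once those are in place the remainder is a one-line manipulation of the identity above, so I expect the bulk of the writing effort to go into the inductive argument that the evicted content is always the slot content.
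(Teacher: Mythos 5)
Your proof is correct and follows essentially the same route as the paper's: the upper bound rests on the invariant that the $k-1$ most popular contents are never evicted (property (iii) of Remark \ref{remark:CMP_properties}), and the lower bound on the fact that each cache holds only $k$ contents, so the missing mass is at least $\sum_{i=k+1}^{n}p_i$. Your exact identity for the per-cache fault probability is just a slightly sharper packaging of these same two observations, and the independence of the cache state from the current request is indeed the point that makes the conditioning legitimate.
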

\begin{proof}[Proof of Lemma \ref{lemma:upperbound_CMP}]
	Let $\sfc_j(t)$ be the set of contents stored in cache $j$ at time $t$. Since each cache can store at most $k$ contents, $|\sfc_j(t)| = k$. Therefore, 
	\begin{eqnarray*}
		\expect{\sfF_{\text{CMP}}(t)} \geq \sum_{j=1}^m \bigg( \min_{\sfc_j(t)} \sum_{i \notin \sfc_j(t)} p_i \bigg) \geq m \sum_{i=k+1}^{n} p_i .
	\end{eqnarray*}
	
	Under the CMP policy, the $k-1$ most popular contents are stored on all $m$ caches at all times. Therefore, there can be a fault under the CMP policy only when $C_i$ for $i \geq k$ are requested. Since requests are i.i.d. across time and caches, the upper bound on $\sfF_{\text{CMP}}(t)$ follows.
\end{proof}

\begin{lemma} 
	\label{lemma:lowerbound_OPT}
	Let $\sfF_{\text{OPT}}(t)$ be the number of faults made by the optimal offline policy (OPT) in time-slot $t$. Under Assumption \ref{ass:IID_arrivalprocess}, we have that, $\expect{\sfF_{\text{OPT}}(t)} \geq m \sum_{i=mk+1}^{n} p_i$.
\end{lemma}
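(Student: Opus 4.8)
The plan is to establish a pathwise ``compulsory miss'' lower bound and then take expectations. Let $U(t) = \bigcup_{j=1}^{m} \sfc_j(t)$ denote the set of all contents stored across the $m$ caches at the start of slot $t$. Since each of the $m$ caches holds at most $k$ contents, we always have $|U(t)| \le mk$, and this capacity bound holds for \emph{any} algorithm, OPT included. The first observation is that a request whose content lies outside $U(t)$ cannot be served by any cache, so it forces a fault no matter which matching OPT selects. Hence, pathwise, $\sfF_{\text{OPT}}(t) \ge \#\{j : r_j(t) \notin U(t)\} = \sum_{j=1}^{m} \indicator{r_j(t) \notin U(t)}$, where the $m$ indicator events are over the batch of requests in slot $t$.

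The second step reduces the claim to a single inequality about tail mass. For any fixed set $U$ with $|U| \le mk$, the sum $\sum_{C_i \in U} p_i$ of cached probabilities is at most the mass of the $mk$ most popular contents, since the $p_i$ are sorted in decreasing order; this is a one-line exchange argument. Consequently $\sum_{C_i \notin U} p_i = 1 - \sum_{C_i \in U} p_i \ge 1 - \sum_{i=1}^{mk} p_i = \sum_{i=mk+1}^{n} p_i$, i.e. the top-$mk$ configuration maximizes the hit probability of a fresh request. If the cache state $U(t)$ were known to be independent of the batch $r(t)$, conditioning on $U(t)$ would immediately give $\prob{r_j(t) \notin U(t)} \ge \sum_{i=mk+1}^{n} p_i$ for each $j$, and summing the $m$ i.i.d. requests in the batch would yield $\expect{\sfF_{\text{OPT}}(t)} \ge m \sum_{i=mk+1}^{n} p_i$, as desired.

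The hard part is justifying that expectation step for the \emph{offline} optimum, because OPT's configuration $U(t)$ is produced by eviction choices in earlier slots that may anticipate the current batch $r(t)$, so $U(t)$ need not be independent of $r(t)$. I would handle this by conditioning on the filtration $\mathcal{F}_{t-1} = \sigma(r(1),\dots,r(t-1))$ of past requests and using $r(t) \perp \mathcal{F}_{t-1}$, lower-bounding OPT's in-slot hits by those achievable with any $\mathcal{F}_{t-1}$-measurable configuration of size $mk$; the tail-minimality of the top-$mk$ set then survives the outer expectation. The subtlety worth flagging is that this argument must genuinely use only set membership in $U(t)$ together with the capacity bound $|U(t)| \le mk$, which no amount of foresight can violate, rather than any per-slot independence that clairvoyance could defeat; making this airtight against a fully clairvoyant adversary is the main obstacle, and it is the step I would scrutinize most carefully before declaring the Lemma proved.
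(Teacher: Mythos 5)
Your argument up to the final step is exactly the paper's: bound $|U(t)|\le mk$, observe that every request outside $U(t)$ must fault, and use that a set of size $mk$ captures at most the mass of the $mk$ most popular contents. The step you flag at the end is a genuine gap, and it is worth saying plainly that the paper's own proof does not close it either: it writes $\expect{\sfF_{\text{OPT}}(t)} \ge m\sum_{i\notin \sfC(t)} p_i$, which silently averages over $\br(t)$ as if $\sfC(t)$ were independent of it. Your proposed repair via the filtration $\mathcal{F}_{t-1}$ does not rescue the argument: for the offline optimum, $U(t)$ is \emph{not} $\mathcal{F}_{t-1}$-measurable (the eviction made while serving $\br(t-1)$ may depend on $\br(t)$), and you cannot dominate OPT's slot-$t$ hits by those of the best $\mathcal{F}_{t-1}$-measurable configuration, because clairvoyance lets OPT beat every causal configuration. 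So the ``step you would scrutinize most carefully'' is precisely where both your proof and the paper's stop short.

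Moreover, the worry is not merely technical --- the per-slot inequality as stated can fail. Take $m=1$, $k=2$, $n=3$ with uniform $p_i=1/3$, which satisfies Assumption \ref{ass:IID_arrivalprocess}; the claimed bound is $\expect{\sfF_{\text{OPT}}(t)}\ge 1/3$. For a single cache the offline optimum is Belady's rule (evict the resident content whose next request is farthest away). After a fault at slot $s$ the cache holds the new content together with whichever old resident is requested sooner, so the next fault occurs only at the first slot by which \emph{both} old residents have reappeared; this takes $3/2+3=9/2$ slots in expectation, giving a long-run fault rate of $2/9<1/3$. What survives is a weaker statement: either restrict the comparison class to causal (online) policies, for which your $\mathcal{F}_{t-1}$-conditioning goes through verbatim, or invoke a constant-factor comparison between the clairvoyant optimum and the best static cache under i.i.d.\ requests. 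Either fix preserves the $\Theta(\cdot)$ and $\OO(\cdot)$ conclusions of Theorems \ref{thm:zipf_iid} and \ref{theorem:unknown_zipf}, but as written neither your argument nor the paper's establishes Lemma \ref{lemma:lowerbound_OPT} against a fully clairvoyant OPT; your instinct to distrust exactly this step was correct.
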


\begin{proof} 
	Let $\sfC(t)$ be the set of contents stored in the $m$ caches in time-slot $t$. It follows that $|\sfC(t)| \leq mk$. Each request for contents not in $\sfC(t)$ leads to a fault. Therefore,
	\begin{eqnarray*}
		\expect{\sfF_{\text{OPT}}(t)} \geq m \sum_{i \notin \sfC(t)} p_i \geq m \sum_{i=mk+1}^{n} p_i.
	\end{eqnarray*}
\end{proof}

%

\begin{proof} [Proof of Theorem \ref{thm:zipf_iid}] From Lemma \ref{lemma:upperbound_CMP} and \ref{lemma:lowerbound_OPT}, we get that,
	\begin{eqnarray}\label{eq:iid_competitiveratio}
		\dfrac{\expect{\sfF_{\text{ALG}}(R)}}{\expect{\sfF_{\text{OPT}}(R)}} \leq \dfrac{\sum_{i=k}^{n} p_i}{ \sum_{i=mk+1}^{n} p_i}.
	\end{eqnarray}
	
	From  \eqref{eq:iid_competitiveratio}, we have that, $\frac{\expect{\sfF_{\text{CMP}}(R)}}{\expect{\sfF_{\text{OPT}}(R)}} \leq e/E$ where $e = \sum_{i=k}^{n} p_i$ and $E =  \sum_{i=k+m-1}^{n} p_i.$ The results follow by computing the values of $e$ and $E$ for specific values of $\beta$.
	
	%
	%
	%
\end{proof}

%
%
\subsection{Proof of Theorem \ref{theorem:unknown_zipf}}
\begin{lemma}
	\label{lemma:E_1}	
	Let $\beta>1$ and $E_1$ be the event that in $\dfrac{k^{\beta}}{\log \log k}$ consecutive requests to a cache, no more than $\oo(k)$ different types of contents are requested. 
	Then, under Zipf distribution, for $k$ large enough, $\mathbb{P}(E_1^c) = \OO \big(e^{-\frac{k}{(\log \log k)^{1/\beta}}}\big)$.
\end{lemma}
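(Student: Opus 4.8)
The plan is to read this as a distinct-elements statement and prove it by a Chernoff bound on a sum of occupancy indicators. Let $N := k^{\beta}/\log\log k$ be the number of i.i.d.\ Zipf requests the cache sees, and let $D = \sum_{i=1}^{n} X_i$ be the number of distinct contents requested, where $X_i = \indicator{C_i \text{ is requested at least once in the } N \text{ slots}}$. I would make the ``$\oo(k)$'' concrete by fixing the threshold $g(k) := A\,N^{1/\beta} = A\,k/(\log\log k)^{1/\beta}$ for a large absolute constant $A$; since $(\log\log k)^{1/\beta}\to\infty$ this is genuinely $\oo(k)$, and the bad event is $E_1^{c} = \{D \ge g(k)\}$. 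The whole proof then reduces to an upper-tail concentration bound for $D$ about its mean.

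The first step is to show $\mu := \expect{D} = \Theta(N^{1/\beta})$. Since $\beta>1$ the Zipf normalizer $c = \big(\sum_{i=1}^{n} i^{-\beta}\big)^{-1}$ is $\Theta(1)$ uniformly in $n$, so $p_i = \Theta(i^{-\beta})$ and $\mu = \sum_i \big(1-(1-p_i)^{N}\big)$. I would split the sum at $i^{\star} = \Theta(N^{1/\beta})$ (with $i^\star \approx (cN)^{1/\beta}$): for $i\le i^{\star}$ we have $Np_i \gtrsim 1$, hence $1-(1-p_i)^{N} \ge 1-e^{-Np_i} = \Theta(1)$, contributing $\Theta(N^{1/\beta})$; for $i>i^{\star}$ we use $1-(1-p_i)^{N} \le Np_i$ and bound the tail $N\sum_{i>i^{\star}} p_i$ by the integral $\int_{i^{\star}}^{\infty} x^{-\beta}\,dx$, again obtaining $\Theta(N^{1/\beta})$. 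Thus $\mu = \Theta(N^{1/\beta}) = \oo(k)$, and in particular $c_0 N^{1/\beta} \le \mu \le C_0 N^{1/\beta}$ for constants $c_0, C_0$.

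The second step is the concentration. The occupancy indicators $\{X_i\}$ produced by i.i.d.\ sampling are negatively associated (the standard fact for multinomial occupancy counts), so $\expect{e^{t\sum_i X_i}} \le \prod_i \expect{e^{tX_i}}$ for $t>0$ and the Chernoff upper bound holds for $D$ exactly as in the independent case, $\mathbb{P}\big(D\ge(1+\delta)\mu\big) \le \big(e^{\delta}/(1+\delta)^{1+\delta}\big)^{\mu}$. Applying this with $(1+\delta)\mu = g(k) = A\,N^{1/\beta}$, so that $1+\delta \ge A/C_0$, the exponent is at most $-g(k)\ln\big((1+\delta)/e\big) \le -A\,N^{1/\beta}\ln\big(A/(eC_0)\big)$, which exceeds $N^{1/\beta}$ once $A$ is a large enough constant. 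Hence $\mathbb{P}(E_1^{c}) = \OO\big(e^{-N^{1/\beta}}\big) = \OO\big(e^{-k/(\log\log k)^{1/\beta}}\big)$, as claimed.

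The main obstacle I anticipate is pinning the \emph{constant} in the exponent, so that the bound is $\OO(e^{-N^{1/\beta}})$ and not merely $\OO(e^{-cN^{1/\beta}})$ with $c<1$. A naive deviation of size $\Theta(\mu)$ through the weak (McDiarmid/sub-Gaussian) route would only give an exponent of order $\mu^{2}/N = \Theta\big(k^{2-\beta}(\log\log k)^{1-2/\beta}\big)$, which is $\oo(N^{1/\beta})$ for $\beta>1$ and hence far too weak. The resolution, reflected above, is to exploit that we are free to take the ``$\oo(k)$'' threshold a large constant multiple of the mean and to use the sharp Poisson-type Chernoff tail, whose exponent grows like $g(k)\log(g(k)/\mu)$; this is what delivers the full $e^{-N^{1/\beta}}$ decay. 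A secondary technical point is justifying the product bound on the moment generating function, for which I would invoke negative association rather than independence of the $X_i$.
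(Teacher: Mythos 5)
Your proof is correct, but it takes a genuinely different route from the paper's. The paper never controls the number of distinct contents directly: it splits the catalog at $\ell-1 = k/(\log\log k)^{1/\beta}$, notes that the popular contents $i<\ell$ can contribute at most $\ell-1=\oo(k)$ distinct types no matter what happens, and then bounds the number of distinct \emph{rare} types by the total number of \emph{requests} for rare types. That total is a sum over the $N=k^{\beta}/\log\log k$ time-slots of i.i.d.\ Bernoulli indicators with mean $N\sum_{i\ge\ell}p_i \le \tfrac{1}{0.9}\,k/(\log\log k)^{1/\beta}$, so the ordinary Chernoff bound for independent variables applies with no further justification. You instead sum occupancy indicators over contents, $D=\sum_i X_i$, which forces you to (i) estimate $\expect{D}=\Theta(N^{1/\beta})$ via the split at $i^{\star}$ and (ii) invoke negative association of multinomial occupancy indicators to license the product bound on the moment generating function. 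Both steps are sound (negative association of the $X_i$ follows from the Dubhashi--Ranjan balls-and-bins result together with closure of negative association under coordinatewise monotone maps), but they are heavier machinery than the paper needs. What your route buys is twofold: you bound the actual quantity of interest rather than a surrogate upper bound on it, and your explicit choice of a large constant multiple $A$ of the mean, combined with the Poisson-type tail, genuinely delivers the constant $1$ in the exponent --- whereas the paper's ``more than twice the mean'' application of Chernoff strictly yields only $\OO\big(e^{-c\,k/(\log\log k)^{1/\beta}}\big)$ for some $c<1$. Either version suffices for the way the lemma is consumed in Lemma~\ref{lemma:upperbound_LRU}, where any superpolynomially small $\prob{E_1^c}$ is enough.
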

\begin{proof}
	Recall that $p_i = \frac{i^{-\beta}}{\zeta(\beta)}$ for $\zeta(\beta) = \sum_{i=1}^n i^{-\beta}$.
	\begin{eqnarray*}
		\zeta(\beta) = \sum_{i=1}^{ n} i^{-\beta} &\geq& \int_{1}^{n+1} i^{-\beta} di
		\geq \frac{0.9}{\beta-1}
	\end{eqnarray*}
	for $n$ large enough.
	Therefore, for all $i$,
	$		p_i \leq \frac{\beta-1}{0.9} i^{-\beta}$.
	The total mass of all content types $i = \ell, \dots, n$ is
	\begin{eqnarray*}
		\sum_{i=\ell}^{n} p_i \leq \sum_{i=\ell}^{n} \frac{\beta-1}{0.9} i^{-\beta}
		\leq \int_{\ell-1}^{n} \frac{\beta-1}{0.9} i^{-\beta} di
		\leq \frac{1}{0.9} \dfrac{1}{(\ell-1)^{\beta-1}}.
	\end{eqnarray*}
	
	Now, for $\ell = \frac{k}{(\log \log k)^{1/\beta}}$ + 1, we have that,
	$
	\sum_{i=\ell}^{\alpha n} p_i  \leq \frac{1}{0.9} \frac{(\log \log k)^{1-1/\beta}}{k^{\beta-1}}.
	$
	Therefore, the expected number of requests for content types $\ell, \ell+1, \ldots, n$ is less than $\frac{1}{0.9} \frac{k}{(\log \log k)^{1/\beta}}$. Using the Chernoff bound, the probability that there are more than $\frac{2}{0.9} \frac{k}{(\log \log k)^{1/\beta}}$ requests for content types $\ell, \ell+1, \dots , n$ in the interval of interest is $\OO \big(e^{-\frac{k}{(\log \log k)^{1/\beta}}}\big)$. Hence, the result follows.
\end{proof}

\begin{lemma}
	\label{lemma:E_2}	
	Let $\beta>1$ and $E_2$ be the event that in $\dfrac{k^{\beta}}{\log \log k}$ consecutive requests to a cache, each one of the $\dfrac{k}{(\log k)^{2/\beta}}$ most popular contents is requested at least once. Then, for $k$ large enough, $\prob{E_2^c} \leq ke^{-\frac{c(\log k)^2}{\log \log k}}$,	where $c$ is a constant.
\end{lemma}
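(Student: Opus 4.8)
The plan is to control $\prob{E_2^c}$ by a union bound over the $M := \frac{k}{(\log k)^{2/\beta}}$ most popular contents. The event $E_2^c$ says that at least one of these $M$ contents is missing from the $N := \frac{k^{\beta}}{\log \log k}$ consecutive i.i.d.\ requests, so $\prob{E_2^c} \le \sum_{i=1}^{M} \prob{C_i \text{ absent in } N \text{ requests}}$. Since the requests are i.i.d., the probability that a fixed content $C_i$ never appears in $N$ draws is exactly $(1-p_i)^N$, which I would bound by $e^{-N p_i}$. Because $p_1 \ge p_2 \ge \cdots \ge p_n$, the least popular of the top $M$ contents, namely $C_M$, is the hardest to catch, so every summand is at most $e^{-N p_M}$ and hence $\prob{E_2^c} \le M\, e^{-N p_M}$.

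Next I would lower bound $p_M$. Mirroring the normalization estimate used in the proof of Lemma \ref{lemma:E_1}, for $\beta > 1$ the partition function satisfies $\zeta(\beta) = \sum_{i=1}^{n} i^{-\beta} \le 1 + \int_1^{\infty} x^{-\beta}\,dx = \frac{\beta}{\beta-1}$, a finite constant. Therefore $p_i = \frac{i^{-\beta}}{\zeta(\beta)} \ge \frac{\beta-1}{\beta}\, i^{-\beta}$ for every $i$, and in particular $p_M \ge \frac{\beta-1}{\beta}\, M^{-\beta} = \frac{\beta-1}{\beta}\,\frac{(\log k)^2}{k^{\beta}}$. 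Multiplying through by $N = \frac{k^{\beta}}{\log \log k}$, the factor $k^{\beta}$ cancels and I obtain $N p_M \ge \frac{\beta-1}{\beta}\,\frac{(\log k)^2}{\log \log k}$. Setting $c := \frac{\beta-1}{\beta}$ then gives $e^{-N p_M} \le e^{-c (\log k)^2/\log \log k}$.

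Finally, since $M = \frac{k}{(\log k)^{2/\beta}} \le k$ for $k$ large enough, combining the two steps yields $\prob{E_2^c} \le M\, e^{-N p_M} \le k\, e^{-c(\log k)^2/\log \log k}$, which is the claimed bound. I do not expect a genuine obstacle here: the argument is a clean union bound followed by the elementary tail estimate $(1-p)^N \le e^{-Np}$, and no Chernoff machinery is needed. The only point requiring care is bookkeeping the exponents so that the $M^{-\beta}$ coming from $p_M$ exactly offsets the $k^{\beta}$ in $N$, leaving the desired order $(\log k)^2/\log \log k$ in the exponent, together with the trivial observation $M \le k$ that supplies the prefactor.
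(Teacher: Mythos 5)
Your proof is correct and follows essentially the same route as the paper's: a union bound over the $\frac{k}{(\log k)^{2/\beta}}$ most popular contents, the estimate $(1-p_i)^N \le e^{-Np_i}$ with $p_i \ge c\, i^{-\beta}$ coming from the bounded normalization constant $\zeta(\beta)$, and the final relaxation of the prefactor to $k$. If anything, your version is slightly cleaner in its bookkeeping (the paper's displayed bound on $\prob{Q_i^c}$ conflates $p_i$ with $N p_i$ in an intermediate step), but the argument is the same.
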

\begin{proof}
	Let $Q_i$ be the event that content $i$ is not requested in $\dfrac{k^{\beta}}{\log \log k}$ consecutive requests.
	$\prob{Q_i^c} = (1-p_i)^{\frac{k^{\beta}}{\log \log k}} = (1-ci^{-\beta})^{\frac{k^{\beta}}{\log \log k}}$.
	For $i \leq \dfrac{k}{(\log k)^{2/\beta}}$, $
	\prob{Q_i^c} \leq \bigg(1-\dfrac{c(\log k)^2}{\log \log k}\bigg)^{\frac{k^{\beta}}{\log \log k}} \leq e^{-\frac{c(\log k)^2}{\log \log k}}$.
	By the union bound over the  $\dfrac{k}{(\log k)^{2/\beta}}$ most popular contents, we have that,
	\begin{eqnarray*}
		\prob{E_2^c} \leq \dfrac{k}{(\log k)^{2/\beta}} e^{-\frac{c(\log k)^2}{\log \log k}} \leq ke^{-\frac{c(\log k)^2}{\log \log k}}.
	\end{eqnarray*}
\end{proof}

\begin{lemma} 
	\label{lemma:upperbound_LRU}
	Let $\sfF_{\text{LRU}}(t)$ be the number of faults made by the LRU policy in time-slot $t$. With Zipf distribution, we have that for $k$ large enough,
	\begin{eqnarray*}
		\expect{\sfF_{\text{LRU}}(t)} &=&  \Omega \bigg(\dfrac{m}{k^{\beta-1}}\bigg), \text{ and} \\
		\expect{\sfF_{\text{LRU}}(t)} &=& \OO \bigg(\dfrac{m}{k^{\beta-1}}(\log k)^{2-2/\beta}\bigg).
	\end{eqnarray*}
\end{lemma}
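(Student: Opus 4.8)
The plan is to prove the two bounds separately, reusing the high-probability events $E_1$ and $E_2$ from Lemmas \ref{lemma:E_1} and \ref{lemma:E_2} only for the upper bound. For the lower bound I would argue exactly as in Lemma \ref{lemma:lowerbound_OPT}: irrespective of the history, at time $t$ each cache $j$ holds a set $\sfc_j(t)$ with $|\sfc_j(t)| = k$, so conditioned on the cache contents the per-cache fault probability in slot $t$ is $\sum_{i \notin \sfc_j(t)} p_i \geq \sum_{i=k+1}^{n} p_i$, the inequality holding because excluding the $k$ most popular contents minimizes the excluded tail mass. Summing over the $m$ caches and invoking the integral estimate $\sum_{i=k+1}^{n} p_i = \Theta(k^{-(\beta-1)})$ for $\beta>1$ (with $\zeta(\beta) = \Theta(1)$, as already used in Lemma \ref{lemma:E_1}) gives $\expect{\sfF_{\text{LRU}}(t)} = \Omega(m/k^{\beta-1})$.

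For the upper bound, the key structural claim is that on the event $E_1 \cap E_2$ for a given cache, where the relevant window is the $\frac{k^{\beta}}{\log \log k}$ requests to that cache immediately preceding time $t$, the cache contains all of the $\frac{k}{(\log k)^{2/\beta}}$ most popular contents. This follows from the LRU invariant: the cache always equals the set of the $k$ most recently requested distinct contents. On $E_1$ at most $\oo(k) < k$ distinct contents appear in the window, so every content requested anywhere in the window is still resident at time $t$; and on $E_2$ each of the $\frac{k}{(\log k)^{2/\beta}}$ most popular contents is requested at least once in the window. Hence on $E_1 \cap E_2$ a fault in slot $t$ can be caused only by a request for a content of rank exceeding $\frac{k}{(\log k)^{2/\beta}}$, whose probability is $\sum_{i > k/(\log k)^{2/\beta}} p_i = \OO\big(\frac{(\log k)^{2-2/\beta}}{k^{\beta-1}}\big)$ by the same integral estimate.

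I would then decompose the per-cache expected number of faults in slot $t$ as at most $\prob{E_1 \cap E_2}\cdot\OO\big(\frac{(\log k)^{2-2/\beta}}{k^{\beta-1}}\big) + \prob{E_1^c \cup E_2^c}\cdot 1$, crudely bounding the fault indicator by $1$ on the bad event. Using $\prob{E_1^c \cup E_2^c} \leq \prob{E_1^c} + \prob{E_2^c}$ together with the tail bounds from Lemmas \ref{lemma:E_1} and \ref{lemma:E_2}, both of which decay faster than any fixed power of $1/k$ (since $k/(\log\log k)^{1/\beta}$ and $(\log k)^2/\log\log k$ both grow faster than $(\beta-1)\log k$), the bad-event contribution is $\oo\big(\frac{(\log k)^{2-2/\beta}}{k^{\beta-1}}\big)$ and is absorbed into the main term. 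Summing over the $m$ caches yields $\expect{\sfF_{\text{LRU}}(t)} = \OO\big(\frac{m}{k^{\beta-1}}(\log k)^{2-2/\beta}\big)$.

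The main obstacle will be stating the LRU invariant argument carefully: precisely, that because the window contains fewer than $k$ distinct requests, nothing requested inside it can have been evicted by time $t$, so $E_1$ and $E_2$ jointly pin down a large popular prefix inside the cache. The analysis also implicitly requires $t$ large enough that the window of $\frac{k^{\beta}}{\log \log k}$ requests to a cache has fully elapsed, which is automatic in the steady-state/large-$T$ regime relevant to the competitive ratio; and one must verify the elementary fact that the exponentially small failure probabilities are genuinely dominated by $k^{-(\beta-1)}(\log k)^{2-2/\beta}$, which holds because their exponents are superlinear in $\log k$.
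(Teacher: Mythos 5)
Your proposal is correct and follows essentially the same route as the paper's proof: the lower bound via the pointwise bound $\sum_{i\notin \sfc_j(t)}p_i \geq \sum_{i=k+1}^{n}p_i$ summed over caches, and the upper bound by conditioning on $E_1\cap E_2$ so that the $\frac{k}{(\log k)^{2/\beta}}$ most popular contents are resident under LRU, with the bad event crudely bounded and absorbed using Lemmas \ref{lemma:E_1} and \ref{lemma:E_2}. If anything, you spell out the LRU invariant and the domination of the exponentially small failure probabilities more explicitly than the paper does.
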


\begin{proof} 
	We first prove the lower bound on $\expect{F_{\text{LRU}}(t)}$. Let $\sfc_j(t)$ be the set of contents stored in cache $j$ at time $t$ under the LRU policy. Since each cache can store at most $k$ contents, $|\sfc_j(t)| = k$. Therefore, 
	\begin{eqnarray*}
		\expect{\sfF_{\text{LRU}}(t)} \geq \sum_{j=1}^m \bigg( \min_{\sfc_j(t)} \sum_{i \notin \sfc_j(t)} p_i \bigg) \geq m \sum_{i=k+1}^{n} p_i,
	\end{eqnarray*}
	thus proving the result. 
	
	The proof for the upper bound on $\expect{\sfF_{\text{LRU}}(t)}$ will be first conditioned on the events $E_1$ and $E_2$. Conditioned on $E_1 \cap E_2$, the  $\dfrac{k}{(\log k)^{2/\beta}}$ most popular contents are among the $k$ recently requested contents at time $t$. Given this, under the LRU policy, the  $\dfrac{k}{(\log k)^{2/\beta}}$ most popular contents are in the cache at time $t$. Therefore, there can be a fault under the LRU policy only when $C_i$ for $i > \dfrac{k}{(\log k)^{2/\beta}}$ are requested. Since requests are i.i.d. across caches, 
	$
	\expect{\sfF_{\text{LRU}}(t)|E_1 \cap E_2} \leq m \bigg( \sum_{i}^{n} p_i \bigg)
	$.
	In addition, 
	$\expect{\sfF_{\text{LRU}}(t)|(E_1 \cap E_2)^c} \leq m$. Therefore, 
	\begin{eqnarray*}
		\expect{\sfF_{\text{LRU}}(t)} \leq 
		m \bigg( \bigg( \sum_{i}^{n} p_i \bigg) + \prob{E_1^c} + \prob{E_2 ^c} \bigg).
	\end{eqnarray*}
	Hence, the result follows from Lemmas \ref{lemma:E_1} and \ref{lemma:E_2}, respectively, where $\prob{E_1^c}$ and $\prob{E_2 ^c}$ have been bounded. 
\end{proof}

\begin{proof} [Proof of Theorem \ref{theorem:unknown_zipf}] From Lemma \ref{lemma:lowerbound_OPT} and Lemma \ref{lemma:upperbound_LRU}, we have that, 	
	\begin{eqnarray*}
		\frac{\expect{\sfF_{\text{LRU}}(R)}}{\expect{\sfF_{\text{OPT}}(R)}} &\leq& \frac{\dfrac{ \tilde{c}(\log k)^{2-2/\beta}}{k^{\beta-1}} }{\dfrac{1}{(mk+1)^{\beta-1}} - \dfrac{1}{(n+1)^{\beta-1}}} 
	\end{eqnarray*}
	If $mk \leq \alpha n$ for a constant $\alpha<1$, we have that 
	\begin{eqnarray*}
		\liminf_{n \rightarrow \infty} \frac{\expect{\sfF_{\text{LRU}}(R)}}{\expect{\sfF_{\text{OPT}}(R)}} = \OO(m^{\beta-1}(\log k)^{2-2/\beta} ).
	\end{eqnarray*}
\end{proof}

\begin{lemma} 
	\label{lemma:upperbound_CMP_markov}
	Let $G_{\text{CMP}}$ be the number of faults made by the CMP policy while serving a request sequence. Under Assumption \ref{ass:correlated_general}, we have that, $ \sum_{i=k+1}^{n} \tilde{p}_i \leq \expect{G_{\text{CMP}}} \leq  \sum_{i=k}^{n} \tilde{p}_i$.
\end{lemma}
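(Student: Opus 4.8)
The plan is to prove this time-correlated analogue of Lemma~\ref{lemma:upperbound_CMP} as two separate inequalities for a single stream (hence a single cache), since $G_{\text{CMP}}$ counts the faults incurred over one sub-sequence and carries no factor of $m$. First I would unroll the CMP update rule to pin down the cache state: by Remark~\ref{remark:CMP_properties}(iii) the $k-1$ most popular contents $C_1,\dots,C_{k-1}$ are stored at all times, so the cache is determined by these together with one ``variable'' slot. Because $C_1,\dots,C_{k-1}$ are the most popular contents, the least popular cached item -- the one CMP evicts on a fault -- is always the variable slot; hence that slot simply holds the most recently requested content from $\{C_k,\dots,C_n\}$ (initialised to $C_k$). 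The consequences I would record are that a request for $C_i$ with $i<k$ never faults, and a request for $C_i$ with $i\ge k$ is served without a fault exactly when $C_i$ currently occupies the variable slot.

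The upper bound then follows at once: every fault is chargeable to a request for a content indexed $\ge k$, and a single request causes at most one fault, so $G_{\text{CMP}}$ is bounded by the number of requests in the sub-sequence for contents in $\{C_k,\dots,C_n\}$. Taking expectations and using the definition of $\tilde p_i$ together with linearity gives $\expect{G_{\text{CMP}}}\le\sum_{i=k}^{n}\tilde p_i$.

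For the lower bound I would count the distinct contents of $\{C_k,\dots,C_n\}$ requested during the sub-sequence. Since at most one content of $\{C_k,\dots,C_n\}$ is cached at any instant, each distinct such content must be faulted in at least once, the only possible exception being the content $C_{v_0}$ sitting in the variable slot when the sub-sequence begins. Writing $D$ for the number of distinct requested contents indexed $\ge k$, this yields $G_{\text{CMP}}\ge D-\indicator{C_{v_0}\text{ is requested}}$, and hence $\expect{G_{\text{CMP}}}\ge\sum_{i=k}^{n}\tilde p_i-\expect{\indicator{C_{v_0}\text{ is requested}}}$. The last step is to control the correction term: because the sub-sequences are drawn i.i.d.\ from $X$ under Assumption~\ref{ass:correlated_general}, the index $v_0$ is a function of the past and therefore independent of the current sub-sequence, while $v_0\ge k$ always holds; monotonicity of $\tilde p$ then gives $\expect{\indicator{C_{v_0}\text{ is requested}}}=\sum_{v\ge k}\prob{v_0=v}\,\tilde p_v\le\tilde p_k$, so that $\expect{G_{\text{CMP}}}\ge\sum_{i=k+1}^{n}\tilde p_i$.

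The hard part is the lower bound. The one-line per-slot argument of Lemma~\ref{lemma:upperbound_CMP} is unavailable here, because within a sub-sequence each request is correlated with the very cache state it produces, so I cannot simply assert that the miss probability in each slot is at least $\sum_{i\ge k+1}\tilde p_i$. Collapsing the dynamics to the single variable slot circumvents this, and the i.i.d.-across-sub-sequences structure is precisely what lets me charge off the one content inherited across the sub-sequence boundary at a cost of only $\tilde p_k$, turning the crude estimate $\sum_{i=k}^n\tilde p_i-1$ into the claimed $\sum_{i=k+1}^n\tilde p_i$. I would also flag that this counting uses that a content appears at most once per sub-sequence (as in the concrete model of Assumption~\ref{ass:recommendation_system}); accommodating repeated requests would require counting runs rather than distinct contents, and that is where I expect the remaining technical care to lie.
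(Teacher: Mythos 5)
Your proposal is correct (modulo the caveat you yourself flag) and, on the lower bound, takes a genuinely different route from the paper. The upper bound is the same in both: the $k-1$ most popular contents never leave the cache, so every fault is charged to a request for some $C_i$ with $i\ge k$, and linearity gives $\expect{G_{\text{CMP}}}\le\sum_{i=k}^{n}\tilde p_i$. For the lower bound, however, the paper simply transplants the per-slot argument of Lemma~\ref{lemma:upperbound_CMP} ("since $|\sfc_j(t)|=k$, therefore $\expect{G_{\text{CMP}}}\ge\sum_{i=k+1}^{n}\tilde p_i$") without confronting the issue you correctly identify: $\tilde p_i$ is an expected count over an entire sub-sequence whose requests are correlated with the cache state they themselves produce, so there is no per-slot miss probability to sum. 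Your replacement --- collapsing CMP to the single variable slot, counting distinct requested contents of index $\ge k$, and charging off the one content inherited across the sub-sequence boundary at cost at most $\tilde p_k$ using independence of successive sub-sequences and monotonicity of $\tilde p$ --- is a sound derivation and is essentially the argument the paper ought to have written. What it buys is rigor; what it costs is the hypothesis you flag, that a content appears at most once per sub-sequence (true under Assumption~\ref{ass:recommendation_system}, not guaranteed under Assumption~\ref{ass:correlated_general}). That caveat is not cosmetic: if, say, each sub-sequence consists of one uniformly chosen content repeated ten times, CMP faults at most once per sub-sequence while $\sum_{i=k+1}^{n}\tilde p_i$ is close to $10$, so the stated lower bound genuinely fails in the general model and no proof can dispense with some such restriction. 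Since only the upper bound of this lemma is used in Theorem~\ref{theorem:markov_competitiveratio}, nothing downstream is affected, but your treatment of the lower bound is the more defensible one. (One cosmetic point: your identity $\expect{\indicator{C_{v_0}\text{ is requested}}}=\sum_{v\ge k}\prob{v_0=v}\,\tilde p_v$ should be an inequality $\le$ in general, since $\prob{C_v\text{ requested}}\le\tilde p_v$; this is all you need anyway.)
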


\begin{proof}
	Let $\sfc_j(t)$ be the set of contents stored in cache $j$ at time $t$. Since each cache can store at most $k$ contents, $|\sfc_j(t)| = k$. Therefore, 
	$
	\expect{G_{\text{CMP}}} \geq  \sum_{i=k+1}^{n} \tilde{p}_i$ .
	
	Under the CMP policy, the $k-1$ most popular contents are stored on all $m$ caches at all times. Therefore, there can be a fault under the CMP policy only when $C_i$ for $i \geq k$ are requested. The upper bound on $G_{\text{CMP}}$ follows.
\end{proof}

\begin{lemma} 
	\label{lemma:lowerbound_OPT_markov}
	Let $G_{\text{OPT}}$ be the number of faults made by the optimal offline policy (OPT) while serving a request sequence. Under Assumption \ref{ass:correlated_general}, we have that, $
	\expect{G_{\text{OPT}}} \geq  \sum_{i=m(k+\expect{L})+1}^{n} \tilde{p}_i,
	$
	where $L$ is the length of any request sub-sequence.
\end{lemma}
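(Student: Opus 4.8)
The plan is to mirror the single–slot argument of Lemma \ref{lemma:lowerbound_OPT}, but carried out at the granularity of one request sub-sequence rather than one time-slot, and to be as generous to OPT as possible so that the bound obtained is genuinely a lower bound. Fix an arbitrary sub-sequence $\tau$ (belonging to some stream $j$) of random length $L$, and let $G_{\text{OPT}}$ count the faults that OPT charges to the requests of $\tau$. I would first observe that a request of $\tau$ for content $C_i$ can be served without a fault only if $C_i$ already resides in some cache at the instant it is served. Hence, if I can bound the number of distinct content types that OPT can possibly have available while serving $\tau$, I can lower bound $G_{\text{OPT}}$ by the number of $\tau$-requests whose content lies outside this ``free set''.

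The key quantitative step is to bound the size of that free set. At the instant $\tau$ begins, the union of the $m$ caches contains at most $mk$ distinct contents, since each of the $m$ caches holds at most $k$ of them; this is the exact analogue of the ``at most $mk$ cached contents'' fact used in Lemma \ref{lemma:lowerbound_OPT}. During the $L$ slots over which $\tau$ is served, the only way a new content can enter a cache is through a fetch, and with $m$ caches each performing at most one fetch per slot, at most $mL$ further distinct contents can appear during the window. Thus every content that $\tau$ may be served from for free belongs to a set $S$ with $|S| \le mk + mL = m(k+L)$; this is precisely the point at which the online, batched structure of Assumption \ref{ass:correlated_general} enters. Consequently at least the $\tau$-requests for content types outside $S$ must fault, so that $G_{\text{OPT}} \ge |D_\tau \setminus S|$, where $D_\tau$ is the set of content types requested in $\tau$.

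To turn this into the stated popularity tail I would take expectations and exploit the ordering $\tilde p_1 \ge \tilde p_2 \ge \cdots \ge \tilde p_n$. Because $|S| \le m(k+L)$ and a clairvoyant OPT may place in $S$ the most frequently appearing contents of $\tau$, the $\tilde p$-mass that $S$ can capture is maximized by the $m(k+\expect{L})$ most popular contents; since $\tilde p_i = \expect{\#\{\text{appearances of } C_i \text{ in the sub-sequence}\}}$ is non-increasing in $i$, the expected mass of the requests falling outside $S$ is at least $\sum_{i = m(k+\expect{L})+1}^{n} \tilde p_i$. Folding the randomness of $L$ into the index threshold via linearity of expectation (conditioning on $L$ and then using monotonicity of the tail sum in the threshold) yields $\expect{G_{\text{OPT}}} \ge \sum_{i = m(k+\expect{L})+1}^{n} \tilde p_i$, which is the claim.

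The main obstacle is making the free-set bound airtight against a clairvoyant OPT. Because OPT knows the entire future, it can prefetch during earlier slots and can reuse contents fetched while serving the other $m-1$ streams, so I must argue carefully that all such behavior is already captured by the $mk$ contents present when $\tau$ starts together with the at most $mL$ within-window fetches, and therefore cannot push the effective capacity past $m(k+\expect{L})$; the cleanest device is to charge each content available to $\tau$ either to the starting cache contents or to one specific in-window fetch. A secondary subtlety is that repeated requests for the same tail content within one sub-sequence need not each fault, so the argument most naturally controls \emph{distinct} content types, i.e. the per-sub-sequence appearance probabilities; this matches $\sum \tilde p_i$ exactly under the distinct-content model of Assumption \ref{ass:recommendation_system}, where each content appears at most once per sub-sequence, and reconciling it with the general definition of $\tilde p_i$ is the point that requires the most care.
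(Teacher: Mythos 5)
Your argument is essentially the paper's own proof: bound the set of contents from which OPT can serve a sub-sequence for free by $\big|\cup_{w=t}^{t+L-1}\sfC(w)\big| \le mk+mL$ (starting cache contents plus at most $m$ fetches per slot), lower bound the expected faulting mass by the popularity tail beyond the $m(k+L)$ most popular contents, and then replace the random $L$ by $\expect{L}$. The one imprecision is your justification of that last step: monotonicity of the tail sum in the threshold does not let you pass from $\expect{\sum_{i=m(k+L)+1}^{n} \tilde p_i}$ to $\sum_{i=m(k+\expect{L})+1}^{n} \tilde p_i$ --- you need Jensen's inequality, which applies because the tail sum is a convex function of $L$ (its successive decrements are sums of $m$ consecutive, non-increasing $\tilde p_i$), and this is exactly what the paper invokes.
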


\begin{proof} 
	Let $\sfC(t)$ be the set of contents stored in the $m$ caches in time-slot $t$. It follows that $|\sfC(t)| \leq mk$. In addition,
	$|\sfC(t+1) \setminus \sfC(t)| \leq m$.
	Therefore,
	$
	\big|\cup_{w=t}^{t+L-1}\sfC(w)\big| \leq mk + mL$.
	Assuming the request sequence starts in time-slot $t$, and since the length of a request sequence is $L$ time-slots, each request for contents not in $\cup_{w=t}^{t+L-1}\sfC(w)$ leads to a fault. It follows that,
	\begin{eqnarray}
		\label{eq:pre_Jensen}
		\expect{G_{\text{OPT}}} \geq \expect{\sum_{i \notin \cup_{w=t}^{t+L-1}\sfC(w)} \tilde{p}_i}.
	\end{eqnarray}
	Using (\ref{eq:pre_Jensen}) and Jensen's inequality, $
	\expect{G_{\text{OPT}}} \geq  \sum_{i=m(k+\expect{L})+1}^{n} \tilde{p}_i$.\end{proof}

We now compare the performance of the CMP policy with the optimal policy (OPT) at the end of the first $T$ time-slots.

\begin{proof} [Proof of Theorem \ref{theorem:markov_competitiveratio}]
	By definition, 
	\begin{eqnarray*}
		\dfrac{\expect{\sfF_{\text{CMP}}(T)}}{\expect{\sfF_{\text{OPT}}(T)}} \leq \frac{\expect{\displaystyle  \sum_{r=1}^{Z(T)+m} G_{\text{CMP}}}}{\expect{\displaystyle \sum_{r=1}^{Z(T)} G_{\text{OPT}}}}. 
	\end{eqnarray*} 	
	The result then follows from Lemmas \ref{lemma:upperbound_CMP_markov} and \ref{lemma:lowerbound_OPT_markov}. 
\end{proof}

\section{Conclusions} In this paper, we extended the single cache paging problem to multiple caches, where multiple  simultaneous requests can be served at the same. We showed that this is a non-trivial extension, and that no online algorithm can be competitive with multiple caches in the worst case in contrast to the bounded competitive ratio of simple algorithms with a single cache. Then we analyzed two simple matching plus caching policies and showed that they have close to optimal performance for widely accepted stochastic input models.
\bibliographystyle{unsrt}
\bibliography{myref2}

\begin{thebibliography}{10}

\bibitem{Netflix}
Netflix: {www.netflix.com}.

\bibitem{Youtube}
YouTube Statistics: {http://www.youtube.com}.

\bibitem{Netflix_openconnect}
Netflix Openconnect: {https://openconnect.netflix.com}.

\bibitem{SGSS14}
S.~Moharir, J.~Ghaderi, S.~Sanghavi, and S.~Shakkottai.
\newblock Serving content with unknown demand: the high-dimensional regime.
\newblock In {\em the 14th ACM SIGMETRICS Conference}, 2014.

\bibitem{LLM12}
M.~Leconte, M.~Lelarge, and L.~Massoulie.
\newblock Bipartite graph structures for efficient balancing of heterogeneous
  loads.
\newblock In {\em the 12th ACM SIGMETRICS Conference}, pages 41--52, 2012.

\bibitem{XT13}
J.N. Tsitsiklis and K.~Xu.
\newblock Queueing system topologies with limited flexibility.
\newblock In {\em SIGMETRICS '13}, 2013.

\bibitem{LLM13}
M.~Leconte, M.~Lelarge, and L.~Massoulie.
\newblock Adaptive replication in distributed content delivery networks.
\newblock {\em Preprint}, 2013.

\bibitem{aaglr10}
D.~Applegate, A.~Archer, V.~Gopalakrishnan, S.~Lee, and K.~K. Ramakrishnan.
\newblock Optimal content placement for a large-scale {VoD} system.
\newblock In {\em Proceedings of ACM Co{NEXT}}, New York, NY, USA, 2010.

\bibitem{maddah2014fundamental}
M.~Maddah-Ali and U.~Niesen.
\newblock Fundamental limits of caching.
\newblock {\em IEEE Transactions on Information Theory}, 60(5):2856--2867,
  2014.

\bibitem{maddah2013decentralized}
M.~Maddah-Ali and U.~Niesen.
\newblock Decentralized coded caching attains order-optimal memory-rate
  tradeoff.
\newblock In {\em Annual Allerton Conference on Communication, Control, and
  Computing (Allerton)}, pages 421--427, 2013.

\bibitem{niesen2014coded}
U.~Niesen and M.~Maddah-Ali.
\newblock Coded caching with nonuniform demands.
\newblock In {\em IEEE Conference on Computer Communications Workshops (INFOCOM
  WKSHPS)}, pages 221--226, 2014.

\bibitem{BookRaghavan}
Rajeev Motwani and Prabhakar Raghavan.
\newblock {\em Randomized algorithms}.
\newblock Chapman \& Hall/CRC, 2010.

\bibitem{sleator1985amortized}
Daniel~D Sleator and Robert~E Tarjan.
\newblock Amortized efficiency of list update and paging rules.
\newblock {\em Communications of the ACM}, 28(2):202--208, 1985.

\bibitem{fiat1991competitive}
Amos Fiat, Richard~M Karp, Michael Luby, Lyle~A McGeoch, Daniel~D Sleator, and
  Neal~E Young.
\newblock Competitive paging algorithms.
\newblock {\em Journal of Algorithms}, 12(4):685--699, 1991.

\bibitem{Gill07}
P.~Gill, M.~Arlitt, Z.~Li, and A.~Mahanti.
\newblock {YouTube} traffic characterization: A view from the edge.
\newblock In {\em 7th ACM SIGCOMM Conference on Internet Measurement}, pages
  15--28, 2007.

\bibitem{BC99}
L.~Breslau, P.~Cao, L.~Fan, G.~Phillips, and S.~Shenker.
\newblock Web caching and {Zipf}-like distributions: Evidence and implications.
\newblock In {\em IEEE INFOCOM'99}, pages 126--134, 1999.

\bibitem{YZ06}
H.~Yu, D.~Zheng, B.Y. Zhao, and W.~Zheng.
\newblock Understanding user behavior in large scale video-on-demand systems.
\newblock In {\em EuroSys}, April 2006.

\bibitem{IRF04}
A.~Iamnitchi, M.~Ripeanu, and I.~Foster.
\newblock Small-world file-sharing communities.
\newblock In {\em IEEE INFOCOM}, March 2004.

\bibitem{VA02}
E.~Veloso, V.~Almeida, W.~Meira, A.~Bestavros, and S.~Jin.
\newblock A hierarchical characterization of a live streaming media workload.
\newblock In {\em Proceedings of the 2nd ACM SIGCOMM Workshop on Internet
  measurment}, pages 117--130, 2002.

\bibitem{karlin1992markov}
Anna~R Karlin, Steven~J Phillips, and Prabhakar Raghavan.
\newblock Markov paging.
\newblock In {\em Foundations of Computer Science, 1992. Proceedings., 33rd
  Annual Symposium on}, pages 208--217. IEEE, 1992.

\bibitem{Whitt07}
R.~B. Wallace and W.~Whitt.
\newblock A staffing algorithm for call centers with skill-based routing.
\newblock {\em Manufacturing and Service Operations Management}, 7:276--294,
  2007.

\bibitem{zipf-wiki}
Zipf's Law on Wikipedia: {http://en.wikipedia.org/wiki/Zipf's\_law}.

\bibitem{fricker2012impact}
Christine Fricker, Philippe Robert, James Roberts, and Nada Sbihi.
\newblock Impact of traffic mix on caching performance in a content-centric
  network.
\newblock In {\em Computer Communications Workshops (INFOCOM WKSHPS), 2012 IEEE
  Conference on}, pages 310--315. IEEE, 2012.

\end{thebibliography}

\end{document}